\newtheorem{theorem}{Theorem}[section]
\newtheorem{lemma}[theorem]{Lemma}
\newtheorem{definition}[theorem]{Definition}
\newtheorem{example}{Example}
\newenvironment{proof}{\noindent
  \textbf{Proof.}}{\hfill$\Box$\\}
\newenvironment{proofidea}{\noindent
  \textbf{Proof idea.}}{\hfill$\Box$\\}
\newenvironment{proofsketch}{\noindent
  \textbf{Proof sketch.}}{\hfill$\Box$\\}
\newcommand{\cut}[1]{}
\newcommand{\instr}[5]{\ensuremath{\hbox to 60 pt
    {${#1}$\hfil${#2}$\hfil$ \rightarrow
      $\hfil${#3}$\hfil${#4}$\hfil${#5}$}}}
\newcommand{\cclass}[1]{\ensuremath{\mathbf{#1}}}
\newcommand{\bigo}[1]{\ensuremath{\mathcal{O}(#1)}}
\newcommand{\ecl}[1]{\ensuremath{\mathsf{ecl}(#1)}}
\newcommand{\D}{\ensuremath{\Delta}}
\newcommand{\G}{\ensuremath{\Gamma}}
\newcommand{\vp}{\ensuremath{\varphi}}
\newcommand{\fm}[1]{\emph{#1}}
\newcommand{\de}[1]{\emph{#1}}
\newcommand{\rel}[1]{\ensuremath{\mathcal{#1}}}
\newcommand{\power}[1]{\ensuremath{\mathcal{P}(#1)}}
\newcommand{\powerne}[1]{\ensuremath{\mathcal{P}^{\tiny +}(#1)}}
\newcommand{\union}{\, \cup \,}
\newcommand{\bigunion}{\bigcup \,}
\newcommand{\biginter}{\bigcap \,}
\newcommand{\inter}{\, \cap \,}
\newcommand{\crh}[2]{\ensuremath{\{\, #1 \mid \, #2\, \}}}
\newcommand{\set}[1]{\ensuremath{ \{ #1 \} }}
\newcommand{\lang}{\ensuremath{\mathcal{L}}}
\newcommand{\LTL}{\ensuremath{\textbf{LTL}}}
\newcommand{\con}{\wedge}
\newcommand{\imp}{\rightarrow}
\newcommand{\equivalence}{\leftrightarrow}
\newcommand{\ap}{\textbf{\texttt{AP}}}
\newcommand{\kframe}[1]{\ensuremath{\mathfrak{#1}}}
\newcommand{\mmodel}[1]{\ensuremath{\mathcal{#1}}}
\newcommand{\hintikka}[1]{\ensuremath{\mathcal{#1}}}
\newcommand{\sat}[3]{\ensuremath{\mmodel{#1}, #2 \Vdash #3}}
\newcommand{\cl}[1]{\ensuremath{\mathsf{cl}(#1)}}
\newcommand{\Rule}[1]{\textbf{(#1)}}
\newcommand{\ATL}{\textbf{ATL}}
\newcommand{\agents}{\ensuremath{\Sigma}}
\newcommand{\st}[1]{\ensuremath{\mathbf{states}(#1)}}
\newcommand{\brancharrow}{\ensuremath{\Longrightarrow}}
\newcommand{\tableau}[1]{\ensuremath{\mathcal{#1}}}
\newcommand{\knows}[1]{\ensuremath{\mathbf{K}}_{#1}}
\newcommand{\common}[1]{\ensuremath{\mathbf{C}}_{#1}}
\newcommand{\distrib}[1]{\ensuremath{\mathbf{D}}_{#1}}
\newcommand{\maelcd}{\textbf{MAEL}(CD)}
\newcommand{\cmaelcd}{\textbf{CMAEL(CD)}}
\newcommand{\psmodel}[1]{\ensuremath{\mathcal{M}^*}}
\newcommand{\pseudomodel}[1]{\ensuremath{\mathcal{M}^{**}}}
\begin{document}

\title{Tableau-based procedure for deciding satisfiability in the full
  coalitional multiagent epistemic logic}

\author{Valentin Goranko\footnote{School of Mathematics, University of
    the Witwatersrand, South Africa, \texttt{goranko@maths.wits.ac.za}}
  \and Dmitry Shkatov\footnote{School of
  Computer Science, University of the Witwatersrand, South Africa,
  \texttt{dmitry@cs.wits.ac.za}}}

\date{}

\maketitle

\begin{abstract}
  We study the multiagent epistemic logic \cmaelcd\ with operators for
  common and distributed knowledge for all coalitions of agents. We
  introduce Hintikka structures for this logic and prove that
  satisfiability in such structures is equivalent to satisfiability in
  standard models. Using this result, we design an incremental tableau
  based decision procedure for testing satisfiability in \cmaelcd.
\end{abstract}

\section{Introduction}
\label{sec:intro}

Over the last two decades, multiagent epistemic logics have been found
to be a useful tool for a variety of applications in computer science
and AI (\cite{Fagin95knowledge}, \cite{HoekMeyer95}), the main among
them being design, specification, and verification of distributed
protocols (\cite{HM90}, \cite{Halpern87}, \cite{FHV92}). In this
paper, we consider the \emph{full coalitional multiagent epistemic
  logic}, involving modal operators for \emph{individual} knowledge
for each agent\footnote{The notion of agents used in this paper is
  abstract; for example, agents can be thought of as components of a
  distributed system.}, as well as operators for \emph{common} and
\emph{distributed} knowledge among any (non-empty) \emph{coalition} of
agents; we call that logic \cmaelcd\footnote{Abbreviated from
  \textbf{C}oalitional \textbf{M}ulti\textbf{A}gent \textbf{E}pistemic
  \textbf{L}ogic with \textbf{C}ommon and \textbf{D}istributed
  knowledge.}.  Most of the multiagent epistemic logics studied so far
only cover fragments of \cmaelcd; e.g., the logic considered in
\cite{FHV92} contains, besides the individual knowledge modalities,
the operator of distributed knowledge only for the whole set of agents
in the language, while \cite{vdHM97} extends that system with common
knowledge operator for the whole set of agents. As far as we know, no
provably complete deductive system or a decision procedure has been
developed so far for \cmaelcd, although \cite{Fagin95knowledge}
propose (without proof) an axiomatic system which is presumed to be
complete for this logic.

One of the major issues in applying multiagent epistemic logics to
design of distributed systems is the development of algorithms for
constructive checking of formulae of those logics for satisfiability,
i.e, checking if a formula is satisfiable and, if so, constructing a
model for it. The main purpose of this paper is to develop a
tableau-based algorithm for the constructive satisfiability problem
for \cmaelcd.  In the recent precursor (\cite{GorSh08a}) to the
present paper, we have developed such an algorithm for the multiagent
epistemic logic with operators of individual knowledge, as well as
common and distributed knowledge for the set of \emph{all agents}. In
the present paper, we extend the results of \cite{GorSh08a} to
\maelcd.  The main challenge in such an extension lies in handling the
operators of distributed knowledge parameterized by coalitions of
agents.  Thus, even though the style of the tableaux presented here is
similar to the ones from \cite{GorSh08a}, the proof of correctness of
the procedure required more involved model-theoretic techniques
building on those used in~\cite{FHV92}.  Consequently, the present
paper substantially focuses on overcoming challenges raised by the
presence in the language of coalitional distributed knowledge
modalities.

The satisfiability-checking algorithms of both \cite{GorSh08a} and the
present paper are based on the incremental tableaux in the style first
proposed in~\cite{Wolper85} and adapted recently to logics of
strategic ability in multiagent systems in~\cite{GorSh08}.  Besides
our conviction that this approach to building decision procedures for
logics of multiagent systems is practically most optimal, the
uniformity of method and style of these tableaux is deliberate, as it
reflects our intention to eventually integrate them into a
tableau-based decision procedure for comprehensive logical systems for
reasoning about knowledge, time, and strategic abilities of agents and
coalitions in multiagent systems.

The structure of the paper is as follows: Section
\ref{sec:SyntaxSemantics} presents the syntax and semantics of
\cmaelcd; in Sections \ref{sec:HintikkaStructures} and
\ref{sec:EquivalenceCMAEMs}, we introduce Hintikka structures for this
logic and prove that satisfiability in Hintikka structures is
equivalent to satisfiability in models. Then, in Sections
\ref{sec:Tableau} and \ref{sec:completeness}, we develop the tableau
procedure for testing satisfiability of \cmaelcd-formulae and sketch
proofs of its soundless, completeness, and termination, and briefly
estimate its complexity. We illustrate our tableau procedure with two
examples in the Appendix.

\section{Syntax and semantics of the logic \cmaelcd}
\label{sec:SyntaxSemantics}


The language $\lang$ of \cmaelcd\ contains a (finite or countable) set
\ap\ of \de{atomic propositions}, whose arbitrary members we typically
denote by $p, q, r, \ldots$; a finite, non-empty set $\agents$ of
names of \de{agents}, whose arbitrary members we typically denote by
$a, b \ldots$ and whose subsets, called \emph{coalitions}, we
typically denote by $A, B, \ldots$ (possibly with decorations); a
sufficient repertoire of the Boolean connectives; and, for every
non-empty coalition $A$, the modal operators $\distrib{A}$ (``\emph{it
  is distributed knowledge among $A$ that \ldots}'') and $\common{A}$
(``\emph{it is common knowledge among $A$ that \ldots}''). The
formulae of $\lang$ are thus defined as follows:
\[\vp := p \mid \neg \vp \mid (\vp_1 \con \vp_2) \mid \distrib{A}
\vp \mid \common{A} \vp,\] where $p$ ranges over $\ap$ and $A$ ranges
over non-empty subsets of $\agents$; the set of all such subsets will
henceforth be denoted by $\powerne{\agents}$.  The other Boolean
connectives can be defined as usual. We denote formulae of $\lang$ by
$\vp, \psi, \chi, \ldots$ (possibly with decorations) and omit
parentheses in formulae whenever it does not result in ambiguity.  We
write $\vp \in \lang$ to mean that $\vp$ is a formula of $\lang$.

The distributed knowledge operator $\distrib{A} \vp$ intuitively means
that a ``superagent'', somebody who knows everything that any of the
agents in $A$ knows, can obtain $\vp$ as a logical consequence of his
knowledge. For example, if agent $a$ knows $\psi$ and agent $b$ knows
$\psi \imp \chi$, then $\distrib{\{a,b\}} \chi$ is true even though
neither $a$ nor $b$ knows $\chi$. The operators of individual
knowledge $\knows{a} \vp$ (``\emph{agent $a$ knows that $\vp$}''), for
$a \in \agents$, can then be defined as $\distrib{\set{a}} \vp$,
henceforth written $\distrib{a} \vp$.

The common knowledge operator $\common{A} \vp$ means that $\vp$ is
``public knowledge'' among $A$, i.e., that every agent in $A$ knows
$\vp$, and knows that every agent in $A$ knows $\vp$, etc. For
example, it is common knowledge among drivers that green light means
``go'' and red light means ``stop''.
Formulae of the form $\neg \common{A} \vp$ are referred to as
\de{(epistemic) eventualities}, for the reasons given later on.

Formulae of $\lang$ are interpreted over \emph{coalitional multiagent
  epistemic models}.  In this paper, we also need the auxiliary
notions of \emph{coalitional multiagent epistemic structures and
  frames}, which we now define.

\begin{definition}
  \label{def:cmaes}
  A \de{coalitional multiagent epistemic structure} (CMAES, for
  short) is a tuple $\kframe{G} = (\agents, S, \set{\rel{R}^D_A}_{A
    \in \powerne{\agents}}, \set{\rel{R}^C_A}_{A \in
    \powerne{\agents}})$, where
  \begin{enumerate}
  \item $\agents$ is a finite, non-empty set of agents;
  \item $S \ne \emptyset$ is a set of \de{states};
  \item for every $A \in
    \powerne{\agents}$, $\rel{R}^D_A$ is a binary relation on $S$;
  \item for every $A \in \powerne{\agents}$, $\rel{R}^C_A$ is the
    reflexive, transitive closure of $\bigunion_{A'\subseteq A}
    R^D_{A'}$.
  \end{enumerate}
\end{definition}

\begin{definition}
  \label{def:cmaef}
  A \de{coalitional multiagent epistemic frame} (CMAEF) is
  a CMAES $\kframe{F} = (\agents, S, \set{\rel{R}^D_A}_{A \in
    \powerne{\agents}}, \set{\rel{R}^C_A}_{A \in \powerne{\agents}})$,
  where each $\rel{R}^D_A$ is an equivalence relation satisfying the
  following condition:
  \begin{center}
    \(
    \begin{array}{lcl}
      (\dag) & & \rel{R}^D_A = \biginter_{a \in A} \rel{R}^D_{a}
    \end{array}
    \)
  \end{center}
  If condition (\dag) above is replaced by the following, weaker, one:
  \begin{center}
    \(
    \begin{array}{lcl}
      (\dag \dag) & & \rel{R}^D_A \subseteq \rel{R}^D_B \mbox{ whenever } B \subseteq A,
    \end{array}
    \)
  \end{center}
  then \kframe{F} is a \de{coalitional multiagent epistemic
    pseudo-frame} (pseudo-CMAEF).
\end{definition}

Note that in every (pseudo-)CMAEF $\rel{R}^D_A \subseteq \biginter_{a
  \in A} \rel{R}^D_{a}$, and hence \linebreak $\bigunion_{A'\subseteq
  A} R^D_{A'} = \bigunion_{a \in A} R^D_{a}$. Thus, condition (4) of
Definition~\ref{def:cmaes} is equivalent to requiring that, in
(pseudo-)CMAEFs, $\rel{R}^C_A$ is the transitive closure of
$\bigunion_{a \in A} R^D_{a}$, for every $A \in \powerne{\agents}$.
Moreover, each $\rel{R}^C_A$ in a (pseudo-)CMAEF is an equivalence
relation.

\begin{definition}
  \label{def:cmaem}
  A \de{coalitional multiagent epistemic model} (CMAEM) is a tuple
  $\mmodel{M} = (\kframe{F}, \ap, L)$, where $\kframe{F}$ is a CMAEF,
  $\ap$ is a set of atomic propositions, and $L: S \mapsto
  \power{\ap}$ is a \de{labeling function}, assigning to every state
  $s$ the set $L(s)$ of atomic propositions true at $s$.  If
  $\kframe{F}$ is a pseudo-CMAEF, then $\mmodel{M} = (\kframe{F}, \ap,
  L)$ is a \de{multiagent coalitional pseudo-model} (pseudo-CMAEM).
\end{definition}

The satisfaction relation between (pseudo-)CMAEMs, states, and
formulae is defined in the standard way.  In particular,
\begin{itemize}
\item \sat{M}{s}{\distrib{A} \vp} iff $(s, t) \in \rel{R}^D_A$ implies
  \sat{M}{t}{\vp};
\item \sat{M}{s}{\common{A} \vp} iff $(s, t) \in \rel{R}^C_A$ implies
  \sat{M}{t}{\vp}.
\end{itemize}

\begin{definition}
  Given a (pseudo-)CMAEM $\mmodel{M}$ and $\vp \in \lang$, we say that
  $\vp$ is \de{satisfiable} in $\mmodel{M}$ if \sat{M}{s}{\vp} holds
  for some $s \in \mmodel{M}$ and say that $\vp$ is \de{valid} in
  $\mmodel{M}$ if \sat{M}{s}{\vp} holds for every $s \in \mmodel{M}$.
  Satisfiability and validity in a class of (pseudo-)models are
  defined accordingly.
\end{definition}

The truth condition for the operator $\common{A}$ can be re-stated in
terms of reachability.  Let $\kframe{F}$ be a (pseudo-)CMAEF with
state space $S$ and let $s, t \in S$.  We say that $t$ is
\de{$A$-reachable from $s$} if either $s = t$ or, for some $n \geq 1$,
there exists a sequence $s = s_0 , s_1, \ldots, s_{n-1}, s_n = t$ of
elements of $S$ such that, for every $0 \leq i < n$, there exists $a
\in A$ such that $(s_i, s_{i+1}) \in R^D_a$.  It is then easy to see
that the following truth condition for $\common{A}$ is equivalent to
the one given above:
\begin{itemize}
\item \sat{M}{s}{\common{A} \vp} iff \sat{M}{t}{\vp} whenever $t$ is
  $A$-reachable from $s$.
\end{itemize}

Note also, that if $\agents = \set{a}$, then $\distrib{a} \vp
\equivalence \common{a} \vp$ is valid for all $\vp \in \lang$.  Thus,
the single-agent case is trivialized and, therefore, we assume
throughout the remainder of the paper that $\agents$ contains at least
2 agents.
\section{Hintikka structures}
\label{sec:HintikkaStructures}

Despite our ultimate interest in satisfiability of finite sets of
formulae in CMAEMs, the tableaux we present check for the existence of
a more general kind of semantic structure for $\Theta$ than a model,
namely a \emph{Hintikka structure}.  In this section, we show that
Hintikka structures satisfy the same sets of formulae as
pseudo-CMAEMs; in the next section, we show that CMAEMs satisfy the
same sets of formulae as pseudo-CMAEMs.  Consequently, testing for
satisfiability in a Hintikka structure can replace testing for
satisfiability in a CMAEM.  In the following discussion, for brevity,
we only consider single formulae; the extension to finite sets of
formulae is straightforward.

The most fundamental difference between (pseudo-)models and Hintikka
structures is that while the former specify the truth value of every
formula of \lang\ at each state, the latter only do so for the
formulae relevant to the evaluation of a fixed formula $\theta$.
Another important difference is that the accessibility relations in
(pseudo-) models must satisfy the explicitly stated conditions of
Definition~\ref{def:cmaef}, while in Hintikka structures conditions
are only imposed on the labels of the states in such a way that every
Hintikka structure generates, through the construction of
Lemma~\ref{lm:hintikka_to_psmodels} below, a pseudo-CMAEM so that the
``truth'' of formulae in the labels is preserved in the resultant
pseudo-model. To define Hintikka structures, we need the following
auxiliary notion.
\begin{definition}
  \label{def:fully_expanded}
  A set $\D \subseteq \lang$ is \de{fully expanded} if it satisfies
  the following conditions:
  \begin{itemize}
  \item if $\neg \neg \vp \in \D$, then $\vp \in \D$;
  \item if $\vp \con \psi \in \D$, then $\vp \in \D$ and $\psi \in
    \D$;
  \item if $\neg (\vp \con \psi) \in \D$, then $\neg \vp \in \D$ or
    $\neg \vp \in \D$;
  \item if $\distrib{A} \vp \in \D$, then $\distrib{A'} \vp \in \D$
    for every $A'$ such that $A \subseteq A' \subseteq \agents$;
  \item if $\distrib{A} \vp \in \D$, then $\vp \in \D$;
  \item if $\common{A} \vp \in \D$, then $\distrib{a} (\vp \con
    \common{A} \vp) \in \D$ for every $a \in A$;
  \item if $\neg \common{A} \vp \in \D$, then $\neg \distrib{a} (\vp
    \con \common{A} \vp) \in \D$ for some $a \in A$;
  \item if $\neg \distrib{A} \neg \distrib{B} \vp \in \D$, then
    $\distrib{(A \inter B)} \vp \in \D$.
  \end{itemize}
\end{definition}

\begin{definition}
  \label{def:hs}
  A \de{coalitional multi-agent epistemic Hintikka structure} (CMAEHS
  for short) is a tuple $(\agents, S, \set{\rel{R}^D_A}_{A \in
    \powerne{\agents}}, \set{\rel{R}^C_A}_{A \in \powerne{\agents}},
  H)$ such that \newline $(\agents, S, \set{\rel{R}^D_A}_{A \in
    \powerne{\agents}}, \set{\rel{R}^C_A}_{A \in \powerne{\agents}})$
  is a CMAES, and $H$ is a labeling of the elements of $S$ with sets
  of formulae of \lang\ satisfying the following constraints:
    \begin{description}
    \item[H1] if $\neg \vp \in H(s)$, then $\vp \notin H(s)$, for
      every $s \in S$;
    \item[H2] $H(s)$ is fully expanded, for every $s \in S$;
    \item[H3] if $\neg \distrib{A} \vp \in H(s)$, then there exists $t
      \in S$ such that $(s, t) \in \rel{R}^D_A$ and $\neg \vp \in
      H(t)$;
    \item[H4] if $(s, t) \in \rel{R}^D_A$, then $\distrib{A'} \vp \in
      H(s)$ iff $\distrib{A'} \vp \in H(t)$, for every $A' \subseteq
      A$;
    \item[H5] if $\neg \common{A} \vp \in H(s)$, then there exists $t
      \in S$ such that $(s, t) \in \rel{R}^C_A$ and $\neg \vp \in
      H(t)$.
    \end{description}
\end{definition}

\begin{definition}
  Let $\theta \in \lang$, $\Theta \subseteq \lang$, and $\hintikka{H}$
  be a CMAEHS with state space $S$.  We say that \hintikka{H} is a
  \de{CMAEHS for $\theta$}, or that $\theta$ is \de{satisfiable} in
  \hintikka{H}, if $\theta \in H(s)$ for some $s \in S$; we say that
  $\Theta$ is satisfiable in \hintikka{H} if $\Theta \subseteq H(s)$.
\end{definition}

We now prove that $\theta \in \lang$ is satisfiable in a pseudo-CMAEM
iff there exists a CMAEHS for $\theta$.  Given a pseudo-CMAEM
$\mmodel{M} = (\agents, S, \set{\rel{R}^D_A}_{A \in
  \powerne{\agents}},$ \linebreak $\set{\rel{R}^C_A}_{A \in
  \powerne{\agents}}, \ap, L)$, we define the \emph{extended labeling
function} $L^+: S \mapsto \power{\lang}$ on \mmodel{M} as follows:
$L^+(s) = \crh{\vp}{\sat{M}{s}{\vp}}$.  Then, it is routine to check
the following.

\begin{lemma}
  \label{lm:psmodels_to_hintikka}
  Let $\mmodel{M} = (\agents, S, \set{\rel{R}^D_A}_{A \in
    \powerne{\agents}}, \set{\rel{R}^C_A}_{A \in \powerne{\agents}},
  \ap, L)$ be a pseudo-CMAEM satisfying $\theta$ and let $L^+$ be the
  extended labeling on \mmodel{M}.  Then, $(\agents, S,$ \linebreak
  $\set{\rel{R}^D_A}_{A \in \powerne{\agents}}, \set{\rel{R}^C_A}_{A \in \powerne{\agents}}, L^+)$ is a CMAEHS for $\theta$.
\end{lemma}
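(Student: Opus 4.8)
The plan is to observe that the relational part of the proposed structure is literally the relational part of the pseudo-CMAEM $\mmodel{M}$, which by hypothesis is built on a pseudo-CMAEF and hence is a CMAES; so the four defining clauses of a CMAES hold for free, and the entire burden falls on verifying that the extended labeling $L^+$ meets the five Hintikka constraints \textbf{H1}--\textbf{H5}. Since by definition every member of $L^+(s)$ is a formula true at $s$ in $\mmodel{M}$, each constraint reduces to the corresponding truth clause of the satisfaction relation together with one structural property of the relations $\rel{R}^D_A$ (reflexivity, symmetry/transitivity, or the inclusion $(\dag\dag)$). Throughout I use that $\mmodel{M}$ is a genuine model, so truth is classical: $\neg\vp$ is true at $s$ iff $\vp$ is not true at $s$.

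First the constraints that are direct unwindings. \textbf{H1} is immediate from classicality. \textbf{H3} and \textbf{H5} are just the existential readings of $\neg\distrib{A}\vp$ and $\neg\common{A}\vp$: if $\neg\distrib{A}\vp$ is true at $s$ then some $\rel{R}^D_A$-successor falsifies $\vp$ and hence carries $\neg\vp$ in its $L^+$-label, and symmetrically for $\common{A}$ using the reachability restatement of its truth condition. Among the clauses of full expansion (\textbf{H2}), the Boolean clauses are immediate; the clause $\distrib{A}\vp\to\vp$ uses reflexivity of $\rel{R}^D_A$; and the monotonicity clause $\distrib{A}\vp\to\distrib{A'}\vp$ for $A\subseteq A'$ uses exactly $(\dag\dag)$, which gives $\rel{R}^D_{A'}\subseteq\rel{R}^D_A$, so that truth of $\distrib{A}\vp$ propagates to $\distrib{A'}\vp$. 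For \textbf{H4}, given $(s,t)\in\rel{R}^D_A$ and $A'\subseteq A$, $(\dag\dag)$ yields $(s,t)\in\rel{R}^D_{A'}$, and since $\rel{R}^D_{A'}$ is an equivalence relation $s$ and $t$ share the same $\rel{R}^D_{A'}$-class; as $\distrib{A'}\vp$ quantifies only over that class, it holds at $s$ iff at $t$.

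The two common-knowledge clauses of \textbf{H2} are the fixpoint-style ones, and here I would lean on the reachability truth condition for $\common{A}$. For $\common{A}\vp\to\distrib{a}(\vp\con\common{A}\vp)$: if every $A$-reachable state satisfies $\vp$, then for any $a\in A$ and any $\rel{R}^D_a$-successor $t$ of $s$, $t$ is itself $A$-reachable (hence satisfies $\vp$) and every state $A$-reachable from $t$ is $A$-reachable from $s$ (hence $t$ satisfies $\common{A}\vp$); as $t$ was arbitrary this gives $\distrib{a}(\vp\con\common{A}\vp)$ at $s$. The eventuality clause $\neg\common{A}\vp\to\neg\distrib{a}(\vp\con\common{A}\vp)$ for some $a\in A$ is the contrapositive packaging: a witnessing $A$-reachability path to a $\neg\vp$-state has a first edge labelled by some $a\in A$, and that edge refutes $\distrib{a}(\vp\con\common{A}\vp)$.

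The step I expect to be the main obstacle is the last full-expansion clause, the interaction of nested distributed-knowledge modalities over intersecting coalitions: from $\neg\distrib{A}\neg\distrib{B}\vp$ at $s$ one must obtain $\distrib{(A\inter B)}\vp$ at $s$. This is the only clause that is not a one-line consequence of a single truth clause, and it is precisely the coalitional feature the introduction flags as the source of difficulty. The plan here is to fix an arbitrary $\rel{R}^D_{A\inter B}$-successor $v$ of $s$ together with a witness $t$ satisfying $(s,t)\in\rel{R}^D_A$ at which $\distrib{B}\vp$ holds, and then to try to relocate $v$ to an $\rel{R}^D_B$-successor of $t$ using the inclusions of $(\dag\dag)$ (both $\rel{R}^D_A$ and $\rel{R}^D_B$ sit inside $\rel{R}^D_{A\inter B}$) together with the symmetry and transitivity of the equivalence relations. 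I would isolate exactly which inclusions and closure properties this relocation requires and confirm whether they are genuinely guaranteed by the pseudo-frame axioms alone; this is the delicate point on which the whole reduction to Hintikka structures turns, and I would scrutinize it with small explicit configurations before committing to the argument.
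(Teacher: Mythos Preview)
The paper offers no proof beyond the remark that the verification is routine, so there is nothing substantive to compare against; your case-by-case check of \textbf{H1}, \textbf{H3}, \textbf{H4}, \textbf{H5}, and all but the last clause of \textbf{H2} is correct and is exactly what ``routine'' would unpack to. (A small patch: for the eventuality clause of \textbf{H2} your ``first edge'' argument overlooks the degenerate case where $s$ itself is the witnessing $\neg\vp$-state; but this is harmless, since by reflexivity of $\rel{R}^D_a$ the state $s$ already refutes $\vp\con\common{A}\vp$ whenever $\neg\common{A}\vp$ holds there, so $\neg\distrib{a}(\vp\con\common{A}\vp)$ holds for every $a\in A$.)

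Your instinct about the final full-expansion clause is right, and the scrutiny you propose would reveal that the implication $\neg\distrib{A}\neg\distrib{B}\vp\to\distrib{(A\cap B)}\vp$ is \emph{not} valid in (pseudo-)CMAEMs. Take $\agents=\{a,b,c\}$, states $S=\{s,t,v\}$, let $\rel{R}^D_a$ have equivalence classes $\{s,t\}$ and $\{v\}$, let $\rel{R}^D_b$ be universal, let $\rel{R}^D_c$ be the identity, and define the multi-agent relations by~$(\dag)$; this is a genuine CMAEM. With $A=\{a,b\}$ and $B=\{b,c\}$ we get $\rel{R}^D_A=\rel{R}^D_a$ and $\rel{R}^D_B=\rel{R}^D_c$. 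Making $p$ true only at $t$, one has $\distrib{B}p$ at $t$ (since $\rel{R}^D_B$ is the identity) and hence $\neg\distrib{A}\neg\distrib{B}p$ at $s$ via the witness $t$; yet $\distrib{(A\cap B)}p=\distrib{b}p$ fails at $s$ because $(s,v)\in\rel{R}^D_b$ and $p$ is false at $v$. Thus $L^+(s)$ is not fully expanded in the sense of Definition~\ref{def:fully_expanded}, and the lemma as literally stated does not hold. This clause is never invoked in the proof of Lemma~\ref{lm:hintikka_to_psmodels} nor anywhere else in the paper, so it appears to be a spurious addition to the definition; dropping it restores the lemma without affecting the rest of the development.
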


Now, we argue in the opposite direction.

\begin{lemma}
  \label{lm:hintikka_to_psmodels}
  Let $\theta \in \lang$ be such that there exists a CMAEHS for
  $\theta$.  Then, $\theta$ is satisfiable in a pseudo-CMAEM.
\end{lemma}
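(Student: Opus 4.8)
The plan is to build, from a given CMAEHS $\hintikka{H} = (\agents, S, \set{\rel{R}^D_A}_{A\in\powerne{\agents}}, \set{\rel{R}^C_A}_{A\in\powerne{\agents}}, H)$ for $\theta$, a pseudo-CMAEM $\mmodel{M}'$ on the \emph{same} state space $S$, and then to prove a truth lemma connecting membership in the Hintikka labels with truth in $\mmodel{M}'$. The labeling is forced: set $L(s) = \crh{p\in\ap}{p\in H(s)}$. The real work is in the accessibility relations, since the $\rel{R}^D_A$ of a Hintikka structure need be neither equivalences nor satisfy the pseudo-frame condition $(\dag\dag)$. I would define, for each $A\in\powerne{\agents}$, the relation ${\rel{R}'}^D_A$ to be the reflexive, symmetric, transitive closure of $\bigunion_{A'\supseteq A}\rel{R}^D_{A'}$, and then let ${\rel{R}'}^C_A$ be the reflexive transitive closure of $\bigunion_{A'\subseteq A}{\rel{R}'}^D_{A'}$, as prescribed by Definition~\ref{def:cmaes}(4). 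Each ${\rel{R}'}^D_A$ is an equivalence by construction, and $(\dag\dag)$ holds because $B\subseteq A$ forces $\set{A' : A\subseteq A'}\subseteq\set{A' : B\subseteq A'}$, so the union defining ${\rel{R}'}^D_A$ is contained in the one defining ${\rel{R}'}^D_B$ and closure preserves inclusion; hence $\mmodel{M}' = (\agents, S, \set{{\rel{R}'}^D_A}, \set{{\rel{R}'}^C_A}, \ap, L)$ is a genuine pseudo-CMAEM.

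Two inclusions, to be checked up front, guarantee that no witnesses are lost by this enlargement: first $\rel{R}^D_A\subseteq{\rel{R}'}^D_A$ (immediate, taking $A'=A$ in the union); second $\rel{R}^C_A\subseteq{\rel{R}'}^C_A$, since for $A'\subseteq A$ and $a\in A'$ we have $\rel{R}^D_{A'}\subseteq{\rel{R}'}^D_a$, whence $\bigunion_{A'\subseteq A}\rel{R}^D_{A'}\subseteq\bigunion_{a\in A}{\rel{R}'}^D_a$ and we close. The crux is then the following propagation claim, which I would prove first: if $\distrib{A}\vp\in H(s)$ and $(s,t)\in{\rel{R}'}^D_A$, then $\distrib{A}\vp\in H(t)$ and $\vp\in H(t)$. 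Indeed, $(s,t)\in{\rel{R}'}^D_A$ yields a path each of whose edges lies in some $\rel{R}^D_{A'}$ with $A\subseteq A'$ (or its converse); along such an edge, H4 applied with the coalition $A'$ and the subset $A\subseteq A'$ preserves membership of $\distrib{A}\vp$, so $\distrib{A}\vp\in H(t)$, and full expansion (H2) then gives $\vp\in H(t)$. This is exactly why the enlargement to $\bigunion_{A'\supseteq A}\rel{R}^D_{A'}$ is safe: the extra edges come from \emph{larger} coalitions, along which H4 still transmits the $\distrib{A}$-formulas — precisely the slack afforded by the weaker condition $(\dag\dag)$ rather than $(\dag)$.

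With the propagation claim in hand, the truth lemma — for every $\vp$ and $s$, that $\vp\in H(s)$ implies $\sat{M'}{s}{\vp}$ and $\neg\vp\in H(s)$ implies $\notsat{M'}{s}{\vp}$ — follows by induction on $\vp$. The atomic case uses H1 and the definition of $L$; the Boolean and double-negation cases use H1 together with full expansion. For $\distrib{A}\vp$ the universal direction is the propagation claim, while the existential direction for $\neg\distrib{A}\vp$ uses H3 to produce a witness $t$ with $(s,t)\in\rel{R}^D_A\subseteq{\rel{R}'}^D_A$ and $\neg\vp\in H(t)$. For $\common{A}\vp$ I would argue along an ${\rel{R}'}^C_A$-path by a separate induction on its length: full expansion turns $\common{A}\vp$ into $\distrib{a}(\vp\con\common{A}\vp)$ for each $a\in A$, and applying the propagation claim at each ${\rel{R}'}^D_a$-step shows $\vp\con\common{A}\vp\in H$ at every state of the path, so $\vp$ holds throughout; the existential direction for $\neg\common{A}\vp$ uses H5 and $\rel{R}^C_A\subseteq{\rel{R}'}^C_A$. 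Since $\hintikka{H}$ is a CMAEHS for $\theta$, we have $\theta\in H(s)$ for some $s$, and the truth lemma gives $\sat{M'}{s}{\theta}$, as required.

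The main obstacle is the $\common{A}$ case. The formula $\distrib{a}(\vp\con\common{A}\vp)$ extracted by full expansion is \emph{not} a subformula of $\common{A}\vp$, so it must be handled purely at the level of label-membership (via the propagation claim and H2) and never reinjected into the structural induction, which is invoked only on the genuine subformula $\vp$; keeping these two inductions — the structural one on $\vp$ and the path-length one witnessing $\common{A}$ — cleanly separated is the delicate point. Secondary care is needed in verifying that the closure tailored to produce $(\dag\dag)$ neither collapses the H3/H5 witnesses nor over-generates accessible states violating $\distrib{A}$- or $\common{A}$-labels, which is exactly what the two inclusions and the coalition bookkeeping $A\subseteq A'$ in the propagation claim control.
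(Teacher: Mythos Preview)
Your proposal is correct and follows essentially the same route as the paper: define ${\rel{R}'}^D_A$ as the reflexive-symmetric-transitive closure of $\bigcup_{A'\supseteq A}\rel{R}^D_{A'}$, set $L(s)=H(s)\cap\ap$, verify the resulting structure is a pseudo-CMAEM, and prove the two-clause truth lemma by structural induction, handling the $\distrib{A}$ and $\common{A}$ cases exactly via H4-propagation along the enlarged relations and H3/H5 for the existential witnesses. Your explicit isolation of the propagation claim and your remark about keeping the structural induction on $\vp$ separate from the path-length argument for $\common{A}$ are more carefully articulated than in the paper, but the underlying argument is the same.
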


\begin{proof}
  Let $\hintikka{H} = (\agents, S, \set{\rel{R}^D_A}_{A \in
    \powerne{\agents}}, \set{\rel{R}^C_A}_{A \in \powerne{\agents}},
  H)$ be an CMAEHS for $\theta$.  We construct a pseudo-CMAEM
  \mmodel{M'} satisfying $\theta$ out of \hintikka{H} as follows.

  First, for every $A \in \powerne{\agents}$, let $\rel{R}'^D_A$ be
  the reflexive, symmetric, and transitive closure of $\bigunion_{A
    \subseteq B} \rel{R}^D_{B}$ and let $\rel{R}'^C_A$ be the
  transitive closure of $\bigunion_{a \in A} \rel{R}'^D_a$. Notice
  that $\rel{R}^D_A \subseteq \rel{R}'^D_A$ and $\rel{R}^C_A \subseteq
  \rel{R}'^C_A$ for every $A \in \powerne{\agents}$. Second, let $L(s)
  = H(s) \inter \ap$, for every $s \in S$.  It is then easy to check
  that $\mmodel{M}' = (\agents, S, \set{\rel{R}'^D_A}_{A \in
    \powerne{\agents}}, \set{\rel{R}'^C_A}_{A \in \powerne{\agents}},
  \ap, L)$ is a pseudo-CMAEM.

  To complete the proof of the lemma, we show, by induction on the
  structure of $\chi \in \lang$ that, for every $s \in S$ and every
  $\chi \in \lang$, the following hold:

  \textbf{(i)} $\chi \in H(s) \text{ implies } \sat{M'}{s}{\chi}$;

  \textbf{(ii)} $\neg \chi \in H(s) \text{ implies } \sat{M'}{s}{\neg
    \chi}$.

  Let $\chi$ be some $p \in \ap$.  Then, $p \in H(s)$ implies $p \in
  L(s)$ and, thus, \sat{M'}{s}{p}; if, on the other hand, $\neg p \in
  H(s)$, then due to (H1), $p \notin H(s)$ and thus $p \notin L(s)$;
  hence, \sat{M'}{s}{\neg p}.

  Assume that the claim holds for all subformulae of $\chi$; then, we have
  to prove that it holds for $\chi$, as well.

  Suppose that $\chi$ is $\neg \vp$.  If $\neg \vp \in H(s)$, then the
  inductive hypothesis immediately gives us $\sat{M'}{s}{\neg \vp}$;
  if, on the other hand, $\neg \neg \vp \in H(s)$, then by virtue of
  (H2), $\vp \in H(s)$ and hence, by inductive hypothesis,
  $\sat{M'}{s}{\vp}$ and thus $\sat{M'}{s}{\neg \neg \vp}$.

  The case of $\chi = \vp \con \psi$ is straightforward, using (H2).

  Suppose that $\chi$ is $\distrib{A} \vp$.  Assume, first, that
  $\distrib{A} \vp \in H(s)$.  In view of the inductive hypothesis, it
  suffices to show that $(s, t) \in \rel{R}'^D_A$ implies $t \in
  H(t)$.  So, assume that $(s, t) \in \rel{R}'^D_A$.  There are two
  cases to consider.  If $s = t$, then the conclusion immediately
  follows from (H2).  If, on the other hand, $s \ne t$, then there
  exists an undirected path from $s$ to $t$ along the relations
  $\rel{R}^D_{A'}$, where each $A'$ is a superset of $A$.  Then, in
  view of (H4), $\distrib{A} \vp \in H(t)$; hence, by (H2), $\vp \in
  H(t)$, as desired.

  Assume, next, that $\neg \distrib{A} \vp \in H(s)$.  In view of the
  inductive hypothesis, it suffices to show that there exist $t \in
  S$ such that $(s, t) \in \rel{R}'^D_A$ and $\neg \vp \in H(t)$.  By
  (H3), there exists $t \in S$ such that $(s, t) \in \rel{R}^D_A$ and
  $\neg \vp \in H(t)$.  As $\rel{R}^D_A \subseteq \rel{R}'^D_A$, the
  desired conclusion follows.

  Suppose now that $\chi$ is $\common{A} \vp$.  Assume that
  $\common{A} \vp \in H(s)$.  In view of the inductive hypothesis, it
  suffices to show that if $t$ is $A$-reachable from $s$ in
  $\mmodel{M}'$, then $\vp \in H(t)$. So, assume that either $s = t$
  or, for some $n \geq 1$, there exists a sequence of states $s = s_0,
  s_1, \ldots, s_{n-1}, s_n = t$ such that, for every $0 \leq i < n$,
  there exists $a_i \in \agents$ such that $(s_i, s_{i+1}) \in
  \rel{R}'^D_{a_i}$. In the former case, the desired conclusion
  follows from (H2). In the latter case, we can show by induction on
  $0 \leq i < n$ that $\distrib{a_i} (\vp \con \common{A} \vp) \in
  H(s_i)$.  Then, $\distrib{a_{n-1}} (\vp \con \common{A} \vp) \in
  H(s_{n-1})$, and thus, in view of (H3) and (H2), $\vp \in H(t)$.

  Assume, on the other hand, that $\neg \common{A} \vp \notin H(s)$.
  Then, the desired conclusion follows from (H6), the fact that
  $\rel{R}^C_A \subseteq \rel{R}'^C_A$, and the inductive hypothesis.
  
\end{proof}

\begin{theorem}
  \label{thr:psmod_equals_hintikka}
  Let $\theta \in \lang$.  Then, $\theta$ is satisfiable in a
  pseudo-CMAEM iff there exists a CMAEHS for $\theta$.
\end{theorem}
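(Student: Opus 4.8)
The plan is to observe that this theorem is nothing more than the conjunction of the two lemmas just proved, one lemma supplying each direction of the biconditional. So I would dispatch the two implications separately, invoking the relevant lemma in each case.

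For the left-to-right direction, I would assume that $\theta$ is satisfiable in some pseudo-CMAEM $\mmodel{M} = (\agents, S, \set{\rel{R}^D_A}_{A \in \powerne{\agents}}, \set{\rel{R}^C_A}_{A \in \powerne{\agents}}, \ap, L)$, so that $\sat{M}{s}{\theta}$ for some $s \in S$. I then pass to the extended labeling $L^+$ defined on $\mmodel{M}$ and apply Lemma~\ref{lm:psmodels_to_hintikka}, which tells me that $(\agents, S, \set{\rel{R}^D_A}_{A \in \powerne{\agents}}, \set{\rel{R}^C_A}_{A \in \powerne{\agents}}, L^+)$ is a CMAEHS. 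Since $\sat{M}{s}{\theta}$ gives $\theta \in L^+(s)$ by the very definition $L^+(s) = \crh{\vp}{\sat{M}{s}{\vp}}$, this structure is a CMAEHS \emph{for} $\theta$, and the required existence follows. For the right-to-left direction, I would assume that a CMAEHS for $\theta$ exists and apply Lemma~\ref{lm:hintikka_to_psmodels} verbatim, which directly produces a pseudo-CMAEM satisfying $\theta$.

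I do not expect any genuine obstacle here, since all the substantive model-theoretic work—the construction of the pseudo-model from a Hintikka structure and the truth-preservation induction—has already been carried out inside Lemma~\ref{lm:hintikka_to_psmodels}, and the forward direction is the routine check announced just before Lemma~\ref{lm:psmodels_to_hintikka}. The only point worth a moment's care is that the two lemmas use the phrase ``CMAEHS for $\theta$'' in exactly the sense fixed by the definition (namely $\theta \in H(s)$ for some state $s$), so that the conclusion of the first lemma is literally the hypothesis of the second and vice versa; once this is noted, the theorem is immediate.
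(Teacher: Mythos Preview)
Your proposal is correct and matches the paper's own proof, which simply states that the theorem follows immediately from Lemmas~\ref{lm:psmodels_to_hintikka} and~\ref{lm:hintikka_to_psmodels}. You have merely spelled out explicitly which lemma furnishes which direction and noted that $\theta \in L^+(s)$ follows from the definition of $L^+$, but this is exactly the intended argument.
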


\begin{proof}
  Immediately follows from Lemmas~\ref{lm:psmodels_to_hintikka} and
  \ref{lm:hintikka_to_psmodels}. 
\end{proof}

\section{Equivalence of CMAEMs and pseudo-CMAEMs}
\label{sec:EquivalenceCMAEMs}

In the present section, we prove that pseudo-CMAEMs and CMAEMs satisfy
the same sets of formulae.  The right-to-left direction is immediate,
as every CMAEM is a pseudo-CMAEM.  For the left-to-right direction, we
use a modification of the construction from~\cite[appendix A1]{FHV92}
to show that if $\theta \in \lang$ is satisfiable in a pseudo-CMAEM,
then it is satisfiable in a ``tree-like'' pseudo-CMAEM that actually
turns out to be a bona-fide CMAEM.

\begin{definition}
  Let $\mmodel{M} = (\agents, S, \set{\rel{R}^D_A}_{A \in
    \powerne{\agents}}, \set{\rel{R}^C_A}_{A \in \powerne{\agents}},
  \ap, L)$ be a \linebreak (pseudo-) CMAEM and let $s, t \in S$.  A
  \de{maximal path from $s$ to $t$} in \mmodel{M} is a sequence $s =
  s_0, A_0, s_1, \ldots, s_{n-1}, A_{n-1}, s_n = t$ such that, for
  every $0 \leq i < n$, $(s_i, s_{i+1}) \in \rel{R}^D_{A_i}$, but
  $(s_i, s_{i+1}) \in \rel{R}^D_{B}$ does not hold for any $B$ with
  $A_i \subset B \subseteq \agents$.  A segment $\rho'$ of a maximal
  path $\rho$ starting and ending with a state is a \de{sub-path} of
  $\rho$.
\end{definition}

\begin{definition}
  \label{def:reduced_paths}
  Let $\rho = s_0, A_0 \ldots, A_{n-1}, s_n$ be a maximal path in
  \mmodel{M}.  The \de{reduction} of $\rho$ is obtained by, first,
  replacing in $\rho$ every longest sub-path $s_p, A_p , s_{p+1},
  \ldots, \linebreak A_{p+q-1}, s_{p+q}$ such that $s_p = s_{p+1} =
  \ldots = s_{p+q}$ with $s_p$ (i.e., removing loops) and, then, by
  replacing in the resultant path every longest sub-path $s_j, A_j,
  s_{j+1}, \linebreak \ldots, A_{j+m-1}, s_{j+m}$ such that $A_j =
  A_{j+1} = \ldots = A_{j+m-1}$ with $s_j, A_j, s_{j+m}$ (i.e.,
  collapsing multiple consecutive transitions along the same relation
  with a single transition).  A maximal path is \de{reduced} if it
  equals its own reduction.
\end{definition}

\begin{definition}
  \label{def:tree-like-models}
  A (pseudo-)CMAEM \mmodel{M} is \de{tree-like} if, for every $s, t
  \in \mmodel{M}$, there exists at most one reduced maximal path from
  $s$ to $t$.
\end{definition}

\begin{lemma}
  \label{lm:tree_like_model}
  If $\theta \in \lang$ is satisfiable in a pseudo-CMAEM, then it is
  satisfiable in a (tree-like) CMAEM.
\end{lemma}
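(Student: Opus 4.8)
The plan is to adapt the tree-unraveling construction of \cite[appendix A1]{FHV92} to the coalitional setting. Fix a pseudo-CMAEM $\mmodel{M}$ with state space $S$ and a state $w$ at which $\theta$ holds. I would take as the domain $S^t$ of the new structure the set of all \emph{reduced maximal paths} starting at $w$ (in the sense of Definition~\ref{def:reduced_paths}), together with a projection $\pi$ sending each such path to its final state. The labeling is inherited along $\pi$, namely $L^t(\rho) = L(\pi(\rho))$. The edges of the unraveling are the one-step maximal extensions: $\rho$ is joined to $\rho' = \rho \cdot A \cdot t$ by an edge \emph{labeled} $A$ whenever appending the maximal transition $(\pi(\rho), t) \in \rel{R}^D_A$ yields a reduced maximal path. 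By construction this edge structure is a tree, which is what will eventually give tree-likeness.

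The delicate point is defining the accessibility relations so that the result is a genuine CMAEF rather than merely a pseudo-CMAEF. I would first set, for each agent $a$, the relation $\rel{S}^D_a$ to be the reflexive, symmetric, and transitive closure of the set of edges whose label contains $a$; each $\rel{S}^D_a$ is then an equivalence relation. For a coalition $A$ I would \emph{force} condition $(\dag)$ by defining $\rel{S}^D_A := \biginter_{a \in A} \rel{S}^D_a$, and take $\rel{S}^C_A$ to be the transitive closure of $\bigunion_{a \in A} \rel{S}^D_a$. With these definitions the frame conditions of Definition~\ref{def:cmaef} hold automatically, so $\mmodel{N} = (\agents, S^t, \set{\rel{S}^D_A}_{A \in \powerne{\agents}}, \set{\rel{S}^C_A}_{A \in \powerne{\agents}}, \ap, L^t)$ is a bona-fide CMAEM; what remains is to verify that truth is preserved under $\pi$.

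The heart of the argument is a combinatorial lemma exploiting acyclicity of the tree: \emph{if $(\rho, \rho') \in \biginter_{a \in A} \rel{S}^D_a$, then $(\pi(\rho), \pi(\rho')) \in \rel{R}^D_A$ in $\mmodel{M}$.} Indeed, in a tree there is a unique undirected simple path between $\rho$ and $\rho'$, and since every edge on it is a bridge, for \emph{each} $a \in A$ that path must consist entirely of edges whose label contains $a$; hence every edge on it has label $B \supseteq A$. Projecting through $\pi$, the endpoints are linked in $\mmodel{M}$ by a chain of $\rel{R}^D_B$-steps with each $B \supseteq A$; by $(\dag\dag)$ each such step lies in $\rel{R}^D_A$, and since $\rel{R}^D_A$ is an equivalence relation in the pseudo-model, the endpoints are $\rel{R}^D_A$-related. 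Conversely, any $\rel{R}^D_A$-successor $t'$ of $\pi(\rho)$ is reached by extending $\rho$ with a single maximal edge of label $B \supseteq A$ (taking $B$ maximal with $(\pi(\rho), t') \in \rel{R}^D_B$), which places the extension in $\biginter_{a \in A} \rel{S}^D_a$; this step carries some routine bookkeeping about reductions and self-loops.

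Granting this lemma, I would finish by induction on the structure of $\vp$, proving $\sat{N}{\rho}{\vp}$ iff $\sat{M}{\pi(\rho)}{\vp}$ for every $\rho \in S^t$: the Boolean cases are routine, the $\distrib{A}$ case is exactly the combinatorial lemma together with the induction hypothesis, and the $\common{A}$ case follows because $A$-reachability in $\mmodel{N}$ projects onto $A$-reachability in $\mmodel{M}$ and conversely, via the same edge-label analysis. Since $\theta$ holds at $w$, it holds at the trivial path at $\pi^{-1}(w)$, so $\theta$ is satisfiable in $\mmodel{N}$; and tree-likeness (Definition~\ref{def:tree-like-models}) is immediate from the fact that the unraveling is literally a tree. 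I expect the main obstacle to be precisely the combinatorial lemma: forcing the intersection condition $(\dag)$ risks creating spurious $\distrib{A}$-accessible states, and the entire purpose of unraveling into an \emph{acyclic} structure is to guarantee it does not, so the uniqueness-of-paths argument must be arranged carefully to absorb the reduction and maximality conventions.
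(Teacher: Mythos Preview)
Your proposal is correct and follows essentially the same route as the paper: unravel the pseudo-model into a tree of maximal paths rooted at the satisfying state, and exploit uniqueness of tree paths to force every edge on the path between two $\bigcap_{a\in A}\rel{S}^D_a$-related nodes to carry a label $B\supseteq A$, yielding condition~$(\dag)$. The paper differs only in presentation---it takes \emph{all} $s$-max-paths rather than reduced ones (which spares your ``routine bookkeeping'' about extensions that fail to be reduced), defines each coalitional relation $\rel{R}''^D_A$ directly via a downward-closure stipulation and then \emph{proves} it equals $\bigcap_{a\in A}\rel{R}''^D_a$ (rather than taking the intersection as the definition), and establishes truth preservation by exhibiting a bisimulation $Z=\{(\rho,l(\rho))\}$ instead of by induction on formula structure.
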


\begin{proof}
  Suppose that $\theta$ is satisfied in a pseudo-CMAEM $\mmodel{M} =
  (\agents, S, \set{\rel{R}^D_A}_{A \in \powerne{\agents}}, \linebreak
  \set{\rel{R}^C_A}_{A \in \powerne{\agents}}, \ap, L)$ at state $s$.
  To build a tree-like CMAEM satisfying $\theta$, we use a slight
  modification of the standard technique of tree-unraveling.  The only
  difference between our construction and the standard tree-unraveling
  is that the state space of our tree model is made up of all
  \de{maximal} paths in \mmodel{M} rather than all paths whatsoever.

  Let $\mmodel{M}' = (\agents, S', \set{\rel{R}'^D_A}_{A \in
    \powerne{\agents}}, \set{\rel{R}'^C_A}_{A \in \powerne{\agents}},
  \ap, L')$ be the submodel of \mmodel{M} generated by $s$.  Then,
  $\sat{M'}{s}{\theta}$ since $\mmodel{M}$ and $\mmodel{M}'$ are
  locally bisimilar at $s$. Next, we unravel $\mmodel{M}'$ into a
  model $\mmodel{M}'' = (\agents, S'', \set{\rel{R}''^D_A}_{A \in
    \powerne{\agents}}, \set{\rel{R}''^C_A}_{A \in \powerne{\agents}},
  \linebreak \ap, L'')$ as follows.  First, call a maximal path $\rho$
  in $\mmodel{M}'$ an $s$-max-path if the first component of $\rho$ is
  $s$.  Denote \cut{the length of an $s$-\emph{max-path} $\rho$,
    defined as the number of states in $\rho$, by $len(\rho)$ and} the
  last element of $\rho$ by $l(\rho)$. Notice that $s$ by itself is an
  $s$-max-path.  Now, let $S^{''}$ be the set of all $s$-max-paths in
  $\mmodel{M}'$.  For every $A \in \powerne{\agents}$, let
  $\rel{R}^*{}^D_A$ be \crh{(\rho, \rho')} {\rho, \rho' \in S''
    \text{and } \rho' = \rho, A, l(\rho')} and let, furthermore,
  $\rel{R}''^D_A$ to be the reflexive, symmetric, and transitive
  closure of $\rel{R}^*{}^D_A$. Notice that $(\rho,\rho')\in
  \rel{R}''^D_A$ holds iff one of the paths $\rho$ and $\rho'$ extends
  the other by a sequence of $A$-steps. Therefore, two different
  states in $S''$ can only be connected by $\rel{R}''^D_A$ for at most
  one maximal coalition $A$.  Further, we stipulate the following
  \emph{downwards closure condition}: whenever $(\rho, \tau) \in
  \rel{R}''^D_A$ and $B \subseteq A$, then $(\rho, \tau) \in
  \rel{R}''^D_B$.  The relations $\rel{R}''^C_A$ are then defined as
  in any CMAEF.  To complete the definition of $\mmodel{M}''$, we put
  $L''(\rho) = L'(l(\rho))$, for every $\rho \in S''$.

  It is clear from the construction that $\mmodel{M}''$ is a
  pseudo-CMAEM.  We will now show that it actually is a (tree-like)
  CMAEM and that it satisfies $\theta$.  To prove the first part of
  the claim, we need some extra terminology.

  We call a maximal path $\rho_1, A_1, \rho_2, \ldots, A_{n-1},
  \rho_n$ in $\mmodel{M}''$ \de{primitive} if, for every $0 \leq i <
  n$, either $(\rho_i, \rho_{i+1}) \in \rel{R}^*{}^D_{A_i}$ or
  $(\rho_{i+1}, \rho_i) \in \rel{R}^*{}^D_{A_i}$.  A primitive path
  $\rho_1, A_1, \rho_2, \ldots, A_{n-1}, \rho_n$ is \de{non-redundant}
  if there is no $0 \leq i < n$ such that $\rho_i = \rho_{i+2}$ and
  $A_i = A_{i+1}$. Intuitively, in a non-redundant path we never go
  from a state $\rho$ (forward or backward) along a relation and then
  immediately back to $\rho$ along the same relation.  Since the
  relations $\rel{R}^*{}^D_A$ are edges of a tree, it immediately
  follows that:
  \begin{description}
  \item[(\ddag)] for every pair of states $\rho, \tau \in S''$, there
    exists at most one non-redundant primitive path from $\rho$ to
    $\tau$.
  \end{description}
  Lastly, we call a primitive path $\rho_1, A, \rho_2, \ldots, A,
  \rho_n$ an \de{$A$-primitive path}.

  We will now show that maximal reduced paths in $\mmodel{M}''$ stand
  in one-to-one correspondence with non-redundant primitive paths.  It
  will then follow from (\ddag) that maximal reduced paths between any
  two states of $\mmodel{M}''$ are unique, and thus $\mmodel{M}''$ is
  tree-like, as claimed.  Let $P = \rho_1, A_1, \ldots, A_{n-1},
  \rho_n$, where $\rho_1 = \rho$ and $\rho_n = \tau$, be a maximal
  reduced path from $\rho$ to $\tau$ in $\mmodel{M}''$. Since
  $(\rho_i, \rho_{i+1}) \in \rel{R}''^D_{A_i}$, there exists a
  non-redundant $A_i$-primitive path from $\rho_i$ to $\rho_{i+1}$,
  which in view of (\ddag) is unique. Let us obtain a path $P'$ from
  $\rho$ to $\tau$ by replacing in $\rho$ every link $(\rho_i, A_i,
  \rho_{i+1})$ by the corresponding non-redundant $A_i$-primitive path
  from $\rho_i$ to $\rho_{i+1}$.  Call $P'$ an \de{expansion} of $P$.
  In view of (\ddag), every path has a unique expansion.  Now, it is
  easy to see that $P$ is a reduction of $P'$.  Since the reduction of
  a given path is unique, too, it follows that there exists a
  one-to-one correspondence between maximal reduced paths and
  non-redundant primitive paths in $\mmodel{M}''$.

  Next, we prove that $\rel{R}''^D_A = \biginter_{a \in A}
  \rel{R}''^D_a$ for every $A \in \powerne{\agents}$, and therefore
  that $\mmodel{M}''$ is a CMAEM.  The left to right inclusion is
  immediate from the construction (namely, because of the downward
  saturation condition we imposed on epistemic relations).  For the
  opposite direction, assume that $(s, t) \in \rel{R}''^D_a$ holds for
  every $a \in A$.  Then, for every $a \in A$, there exists a path,
  and therefore a maximal reduced path, from $s$ to $t$ along
  relations $\rel{R}^*{}^D_{A'}$ such that $a \in A'$.  As
  $\mmodel{M}''$ is tree-like, there is only one maximal reduced path
  from $s$ to $t$.  Therefore, the relations $\rel{R}^D_{A'}$ linking
  $s$ to $t$ along this path are such that $A \subseteq A'$ for every
  $A'$.  Then, by the downwards closure condition, there is a path
  from $s$ to $t$ along the relation $\rel{R}''^D_A$ and, hence, $(s,
  t) \in \rel{R}''^D_A$, as desired.

  Finally, it remains to prove that $\mmodel{M}''$ satisfies $\theta$.  First,
  notice that $(\rho, \tau) \in \rel{R}''_A$ iff there exists an
  $A$-primitive path from $\rho$ to $\tau$; hence, as every
  $\rel{R}'_A$ is an equivalence relation, if $(\rho, \tau) \in
  \rel{R}''_A$, then $(l(\rho), l(\tau)) \in \rel{R}'_A$. It is now
  easy to check that the relation $Z = \crh{(\rho, l(\rho)}{\rho \in
    S''}$ is a bisimulation between $\mmodel{M}''$ and $\mmodel{M}'$.
  Since $(s, l(s)) \in Z$, it follows that \sat{M''}{s}{\theta},
  and we are done. 
\end{proof}

\begin{theorem}
  \label{thr:models_equal_hintikka}
  Let $\theta \in \lang$.  Then, $\theta$ is satisfiable in a CMAEM
  iff there exists a Hintikka structure for $\theta$.
\end{theorem}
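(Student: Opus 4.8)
The plan is to obtain this theorem as a direct corollary of the two equivalences already established: Theorem~\ref{thr:psmod_equals_hintikka}, which says that satisfiability in a pseudo-CMAEM is equivalent to the existence of a CMAEHS, and Lemma~\ref{lm:tree_like_model}, which says that satisfiability in a pseudo-CMAEM implies satisfiability in a genuine (tree-like) CMAEM. The only ingredient not yet isolated is the observation that every CMAEM is itself a pseudo-CMAEM; together with the two cited results, this closes the chain of implications into the desired biconditional.

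For the left-to-right direction, I would first record that condition (\dag) of Definition~\ref{def:cmaef} entails condition (\dag\dag): if $B \subseteq A$, then $\biginter_{a \in A} \rel{R}^D_{a} \subseteq \biginter_{a \in B} \rel{R}^D_{a}$, so $\rel{R}^D_A \subseteq \rel{R}^D_B$. Hence every CMAEF is a pseudo-CMAEF and every CMAEM is a pseudo-CMAEM. Consequently, if $\theta$ is satisfiable in a CMAEM, it is a fortiori satisfiable in a pseudo-CMAEM, and Theorem~\ref{thr:psmod_equals_hintikka} immediately yields a CMAEHS for $\theta$.

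For the right-to-left direction, I would start from a CMAEHS for $\theta$, invoke Theorem~\ref{thr:psmod_equals_hintikka} to extract a pseudo-CMAEM satisfying $\theta$, and then apply Lemma~\ref{lm:tree_like_model} to pass from that pseudo-CMAEM to a genuine CMAEM satisfying $\theta$. Concatenating the two directions gives the stated equivalence.

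Since all the substantive model-theoretic work has already been discharged elsewhere---the Hintikka-to-pseudo-model construction in Lemma~\ref{lm:hintikka_to_psmodels} and the pseudo-model-to-model unraveling in Lemma~\ref{lm:tree_like_model}---I expect no genuine obstacle here; the argument is purely a matter of assembling the existing implications. The only point that even warrants an explicit line is the inclusion of CMAEMs among pseudo-CMAEMs, which is immediate from comparing conditions (\dag) and (\dag\dag).
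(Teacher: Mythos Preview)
Your proposal is correct and follows essentially the same approach as the paper: the paper's proof simply reads ``Immediate from Theorem~\ref{thr:psmod_equals_hintikka} and Lemma~\ref{lm:tree_like_model},'' relying on the observation (stated at the start of Section~\ref{sec:EquivalenceCMAEMs}) that every CMAEM is a pseudo-CMAEM. Your explicit verification that $(\dag)$ implies $(\dag\dag)$ is a welcome elaboration of that observation but not a departure from the paper's argument.
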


\begin{proof}
  Immediate from Theorem~\ref{thr:psmod_equals_hintikka} and Lemma
  \ref{lm:tree_like_model}. 
\end{proof}

\section{Tableaux for \cmaelcd}
\label{sec:Tableau}

\subsection{Basic ideas and overview of the tableau procedure}

The tableau procedure for testing a formula $\theta \in \lang$ for
satisfiability is an attempt to construct a non-empty graph
$\tableau{T}^{\theta}$ (called \fm{tableau}) representing \emph{all
  possible} CMAEHSs for $\theta$. The philosophy underlying our
tableau algorithm is essentially the same as the one underpinning the
tableau procedure for \LTL\ from~\cite{Wolper85}, recently adapted to
Alternating-time logic \ATL\ in~\cite{GorSh08} and to multiagent
epistemic logic with operators of common and distributed knowledge for
the whole set of agents in~\cite{GorSh08a}. To make the present paper
self-contained, we first outline the basic idea behind the tableau
algorithm for \cmaelcd, following ~\cite{GorSh08} and \cite{GorSh08a}.
The details of the tableaux presented here, obviously, are specific to
\cmaelcd.

Usually, tableaux check for satisfiability by decomposing the input
formula into ``semantically simpler'' formulae. In the classical
propositional case, ``semantically simpler'' implies ``smaller'', thus
ensuring the termination of the procedure. Another feature of the
tableaux for classical propositional logic is that the decomposition
into simpler formulae results in a simple tree, representing an
exhaustive search for a model---or, to be more precise, a Hintikka set
(the classical analogue of Hintikka structures)---for the input
formula $\theta$.  If at least one leaf of the tree produces a
Hintikka set for $\theta$, the search has succeeded and $\theta$ is
pronounced satisfiable.

These two defining features of the classical tableau method do not
directly apply to logics containing fixed point operators, such as
$\common{A}$. First, decomposing of a formula $\common{A} \vp$
produces the formulae of the form $\distrib{a} (\vp \con \common{A}
\vp)$, which are not exactly ``semantically simpler''; rather, the
unfolding of the monotone operator whose fixed point is $\common{A}
\vp$ is effected. Hence, we cannot take termination for granted and
need to put a mechanism in place that would guarantee it---in our
tableaux, this mechanism consists in the use of prestates, whose role
is to ensure the finiteness, and hence termination, of the
construction.  Second, in the classical case, the only reason why the
tableau might fail to produce a Hintikka set for the input formula is
that every attempt to build such a set results in a collection of
formulae containing a \emph{patent inconsistency} (a pair of formulae
$\vp, \neg \vp$).  In the case of \cmaelcd, there are other such
reasons, as the tableau is meant to represent CMAEHSs, which are more
complicated structures than classical Hintikka sets.  One additional
possible reason for a failure of a node of the tableau to be
satisfiable has to do with eventualities: the presence of an
eventuality $\neg \common{A} \vp$ in the label of a state $s$ of a
CMAEHS requires that there is an $A$-path from $s$ to a state $t$
whose label contains $\neg \vp$. The analogue of this condition in the
tableau is called \emph{realization of eventualities}.  Thus, all
eventualities in a tableau should be realized in order for the tableau
to be ``good'', i.e. to eventually produce a Hintikka structure. The
third possible reason for existence of ``bad'' nodes in the tableau
has to do with successor nodes---it may so happen that some of the
successors of a node $s$ whose satisfiability is necessary for the
satisfaction of $s$ itself are unsatisfiable; a ``good'' tableau
should not contain such ``bad'' nodes.

The tableau procedure consists of three major phases:
\fm{construction}, \fm{prestate elimination}, and \fm{state
  elimination}.  During the construction phase, we produce a directed
graph $\tableau{P}^{\theta}$---called the \emph{pretableau} for
$\theta$---whose set of nodes properly contains the set of nodes of
the tableau $\tableau{T}^{\theta}$ we are building.  Nodes of
$\tableau{P}^{\theta}$ are sets of formulae, some of which, called
\fm{states}, are meant to represent states of a Hintikka structure,
while others, called \fm{prestates}, play an auxiliary, technical role
in the construction of $\tableau{P}^{\theta}$.  During the prestate
elimination phase, we create a smaller graph $\tableau{T}_0^{\theta}$
out of $\tableau{P}^{\theta}$, called the \fm{initial tableau for
  $\theta$}, by eliminating all the prestates of
$\tableau{P}^{\theta}$ and adjusting its edges---prestates have
already fulfilled their role and can be discharged.  Finally, during
the state elimination phase, we remove from $\tableau{T}_0^{\theta}$
all the states, if any, that cannot be satisfied in any CMAEHS, for
one of the three above-mentioned reasons.  The elimination procedure
results in a (possibly empty) subgraph $\tableau{T}^{\theta}$ of
$\tableau{T}_0^{\theta}$, called the \de{final tableau for
  $\theta$}. If some state $\Delta$ of $\tableau{T}^{\theta}$ contains
$\theta$, we declare $\theta$ satisfiable; otherwise, we declare it
unsatisfiable.

The reader is referred to the examples given in the Appendix, to trace all phases of the construction of the tableau.

\subsection{Construction phase}

At this phase, we build the pretableau $\tableau{P}^{\theta}$---a
directed graph whose nodes are sets of formulae, coming in two
varieties: \fm{states} and \fm{prestates}.  States are meant to
represent states of CMAEHSs the tableau attempts to construct, while
prestates are ``embryo states'', expanded into states in the course of
the construction.  Formally, states are fully expanded (recall
Definition~\ref{def:fully_expanded}), while prestates do not have to
be so. Moreover, $\tableau{P}^{\theta}$ contains two types of edge.
As already mentioned, a tableau attempt to produce a compact
representation of all possible CMAEHSs for the input formula; in this
attempt, it organizes an exhaustive search for such CMAEHSs. One type
of edge, depicted by unmarked double arrows $\brancharrow$, represents
the expansion of the tableau as a search tree.  That exhaustive search
considers all possible alternatives, which arise when expanding
prestates into (fully expanded) states by branching in the disjunctive
cases. Thus, when we draw a double arrow from a prestate \G\ to states
$\D$ and $\D'$ (depicted as $\G \brancharrow \D$ and $\G \brancharrow
\D'$, respectively), this intuitively means that, in any CMAEHS, a
state satisfying \G\ has to satisfy at least one of $\D$ and
$\D'$. Our first construction rule, \Rule{SR}, prescribes how to
create states from prestates.

Given a set $\G \subseteq \lang$, we say that $\D$ is a \de{minimal
fully expanded extension of \G} if $\D$ is fully expanded, $\G
\subseteq \D$, and there is no $\D'$ such that $\G \subseteq \D'
\subset \D$ and $\D'$ is fully expanded.

\smallskip

\textbf{Rule} \Rule{SR} Given a prestate $\G$ such that \Rule{SR} has
not been applied to it before, do the following:
\begin{enumerate}
\item Add 
all minimal fully expanded extensions  $\D$ of $\G$ as \de{states};
\item For each so obtained state $\D$, put $\G \brancharrow \D$;
\item If, however, the pretableau already contains a state $\D'$ that
  coincides with $\D$, do not create another copy of $\D'$, but only
  put $\G \brancharrow \D'$.
\end{enumerate}

We denote by $\st{\G}$ the (finite) set \crh{\D}{\G \brancharrow \D}.

The second type of an edge featuring in our tableaux represents
transition relations in the CMAEHS which it attempts to build.
Accordingly, this type of edge is represented by single arrows marked
with formulae whose presence in the source state requires the presence
in the tableau of a target state, reachable by a particular relation.
All such formulae have the form $\neg \distrib{A} \vp$ (as can be seen
from Definition~\ref{def:hs}).
Intuitively, if, say, $\neg \distrib{A} \vp \in \D$, then we need some
prestate $\G$ containing $\neg \vp$ to be accessible from $\D$ by
$\rel{R}^D_A$; the reason we mark this single arrow not just by
coalition $A$, but by formula $\neg \distrib{A} \vp$, is that it helps
us remember not just what relation connects states satisfying $\D$ and
$\G$, but why we had to create this particular $\G$.  This information
will be needed when we start eliminating prestates and then states.

The second construction rule, \Rule{DR}, prescribes how to create
prestates from states.  This rule does not apply to states containing
patent inconsistencies (such sets are called \emph{patently
  inconsistent}), as such states cannot be satisfied in any CMAEHS.

\smallskip \textbf{Rule} \Rule{DR}: Given a state $\D$ such that $\neg
\distrib{A} \vp \in \D$, state $\D$ is not patently inconsistent, and
\Rule{DR} has not been applied to it before, do the following:

\begin{enumerate}
\item Create a new prestate $\G = \set{\neg \vp} \union \bigunion_{A'
    \subseteq A} \crh{\distrib{A'} \psi}{\distrib{A'} \psi \in \D}
  \union \\ \bigunion_{A' \subseteq A} \crh{\neg \distrib{A'}
    \psi}{\neg \distrib{A'} \psi \in \D}$;
\item Connect $\D$ to $\G$ with $\stackrel{\neg \distrib{A}
    \vp}{\longrightarrow}$;
\item If, however, the tableau already contains a prestate $\G' = \G$,
  do not add to it another copy of $\G'$, but simply connect $\D$ to
  $\G'$ with $\stackrel{\neg \distrib{A} \vp}{\longrightarrow}$.
\end{enumerate}


When building a tableau for $\theta \in \lang$, the construction phase
begins with creating a single prestate \set{\theta}.  Afterwards, we
alternate between \Rule{SR} and \Rule{DR}: first, \Rule{SR} is applied
to the prestates created at the previous stage of the construction,
then \Rule{DR} is applied to the states created at the previous stage.
The construction phase comes to an end when every prestate required to
be added to the pretableau has already been added (as prescribed in
point 3 of \Rule{SR}), or when we end up with states to which
\Rule{DR} does not apply.

Since we identify states and prestates whenever possible, to prove
termination of the construction phase it suffices to show that there
are only finitely many possible states and prestates. For that, we use
the concept of an extended closure of a formula.
\begin{definition}
  \label{def:extended_closure}
  Let $\theta \in \lang$.  The \de{closure} of $\theta$, denoted
  $\cl{\theta}$, is the least set of formulae such that $\theta \in
  \cl{\theta}$, $\cl{\theta}$ is closed under subformulae, and the
  following conditions hold:
  \begin{itemize}
  \item if $\distrib{A} \vp \in \cl{\theta}$ and $A \subseteq A'
    \subseteq \agents$, then $\distrib{A'} \vp \in \cl{\theta}$;
  \item if $\common{A} \vp \in \cl{\theta}$, then $\distrib{a} (\vp
    \con \common{A} \vp) \in \cl{\theta}$ for every $a \in A$.
  \end{itemize}
  The \de{extended closure} of $\theta$, denoted $\ecl{\theta}$, is
  the least set such that, if $\vp \in \cl{\theta}$, then $\vp, \neg
  \vp \in \ecl{\theta}$.
\end{definition}

It is straightforward to check that $\ecl{\theta}$ is finite for every
$\theta$ and that all states and prestates of $\tableau{P}^{\theta}$
are subsets of $\ecl{\theta}$; hence, their number is indeed finite;
hence, the construction phase terminates.

\subsection{Prestate elimination phase}

At this phase, we remove from $\tableau{P}^{\theta}$ all the prestates
and unmarked arrows, by applying the following rule:

\smallskip

\Rule{PR} For every prestate $\G$ in $\tableau{P}^{\theta}$, do the
following:

\begin{enumerate}
\item Remove $\G$ from $\tableau{P}^{\theta}$;
\item If there is a state $\D$ in $\tableau{P}^{\theta}$ with $\D
  \stackrel{\chi}{\longrightarrow} \G$, then for every state $\D' \in
  \st{\G}$, put $\D \stackrel{\chi}{\longrightarrow} \D'$;
\end{enumerate}

The resultant graph is denoted $\tableau{T}_0^{\theta}$ and called the
\emph{initial tableau}.

\subsection{State elimination phase}

During this phase, we remove from $\tableau{T}_0^{\theta}$ states that
are not satisfiable in any CMAEHS.  Recall, that there are three
reasons why a state $\D$ of $\tableau{T}_0^{\theta}$ can turn out to
be unsatisfiable: $\D$ is patently inconsistent, \emph{or}
satisfiability of $\D$ requires satisfiability of some other
unsatisfiable ``successor'' states, \emph{or} $\D$ contains an
eventuality that is not realized in the tableau. Accordingly, we have
three elimination rules, \Rule{E1}--\Rule{E3}.

Formally, the state elimination phase is divided into stages; we start
at stage 0 with $\tableau{T}_0^{\theta}$; at stage $n+1$, we remove
from the tableau $\tableau{T}_n^{\theta}$ obtained at the previous
stage exactly one state, by applying one of the elimination rules,
thus obtaining the tableau $\tableau{T}_{n+1}^{\theta}$. We state the
rules below, where $S_m^{\theta}$ denotes the set of states of
$\tableau{T}_{m}^{\theta}$.

\smallskip

\Rule{E1} If $\set{\vp, \neg \vp} \subseteq \D \in S_n^{\theta}$, then
obtain $\tableau{T}_{n+1}^{\theta}$ by eliminating $\D$ from
$\tableau{T}_n^{\theta}$.

\smallskip

\Rule{E2} If $\D$ contains a formula $\chi = \neg \distrib{A} \vp$ and
all states reachable from $\D$ by single arrows marked with $\chi$
have been eliminated at previous stages, obtain
$\tableau{T}_{n+1}^{\theta}$ by eliminating $\D$ from
$\tableau{T}_n^{\theta}$.

\smallskip

For the third elimination rule, we need the concept of
\emph{eventuality realization}.  We say that the eventuality $\xi=
\neg \common{A} \vp$ is realized at $\D$ in $\tableau{T}^{\theta}_n$
if either $\neg \vp \in \D$ or there exists in
$\tableau{T}^{\theta}_n$ a finite path $\D_0, \D_1, \ldots, \D_m$ such
that $\D_0 = \D$, $\neg \vp \in \D_m$, and for every $0 \leq i < m$
there exist $\chi_i = \distrib{B} \psi_i$ such that $B \subseteq A$
and $\D_i \stackrel{\chi_i}{\longrightarrow} \D_{i+1}$. We check for
realization of $\xi$ by running the following marking procedure that
marks all states that realize an eventuality $\xi$ in
$\tableau{T}_n^{\theta}$.  Initially, we mark all $\D \in
S_n^{\theta}$ such that $\neg \vp \in \D$. Then, we repeatedly do the
following: if $\D \in S_n^{\theta}$ is unmarked and there exists at
least one $\D'$ such that $\D \stackrel{\distrib{B}
  \psi}{\longrightarrow} \D'$ for some $B\subseteq A$ and $\D'$ is
marked, then $\D$ gets marked. The procedure ends when no more states
get marked at a current round of marking. Note that marking is carried
out with respect to a fixed eventuality $\xi$ and is, therefore,
repeated as many times as the number of eventualities in (the states)
of a tableau. Now, we can state our last rule.

\medskip

\Rule{E3} If $\D \in S_n^{\theta}$ contains an eventuality $\neg
\common{A} \vp$ that is not realized at $\D$ in
$\tableau{T}_n^{\theta}$, then obtain $\tableau{T}_{n+1}^{\theta}$ by
removing $\D$ from $\tableau{T}_n^{\theta}$.

\medskip

We have so far described individual rules; to describe the state
elimination phase as a whole, we must specify the order of their
application. First, we apply \Rule{E1} to all the states of
$\tableau{T}_0^{\theta}$; once this is done, we do not need to apply
\Rule{E1} again.  The cases of \Rule{E2} and \Rule{E3} are more
involved.  After having applied \Rule{E3} we could have removed all
the states accessible from some $\D$ along the arrows marked with some
formula $\chi$; hence, we need to reapply \Rule{E2} to the resultant
tableau to remove such $\D$'s. Conversely, after having applied
\Rule{E2}, we could have thrown away some states that were needed for
realizing certain eventualities; hence, we need to reapply \Rule{E3}.
Therefore, we need to apply \Rule{E3} and \Rule{E2} in a dovetailed
sequence that cycles through all the eventualities. More precisely, we
arrange all eventualities occurring in the tableau obtained from
$\tableau{T}_0^{\theta}$ after having applied \Rule{E1} in a list:
$\xi_1, \ldots, \xi_m$.  Then, we proceed in cycles. Each cycle
consists of alternatingly applying \Rule{E3} to the pending
eventuality (starting with $\xi_1$), and then applying \Rule{E2} to
the resulting tableau, until all the eventualities have been dealt
with.  These cycles are repeated until no state is removed in a whole
cycle. Then, the state elimination phase is over.

The graph produced at the end of the state elimination phase is called
the \fm{final tableau for $\theta$}, denoted by $\tableau{T}^{\theta}$
and its set of states is denoted by $S^{\theta}$.

\begin{definition}
  The final tableau $\tableau{T}^{\theta}$ is \de{open} if $\theta \in
  \D$ for some $\D \in S^{\theta}$; otherwise, $\tableau{T}^{\theta}$
  is \de{closed}.
\end{definition}

The tableau procedure returns ``no'' if the final tableau is closed;
otherwise, it returns ``yes'' and, moreover, provides sufficient
information for producing a finite model satisfying $\theta$; that
construction is sketched in Section \ref{sec:completeness}.

\section{Soundness, completeness, and complexity}
\label{sec:completeness}

The \emph{soundness} of a tableau procedure amounts to claiming that
if the input formula $\theta$ is satisfiable, then the tableau for
$\theta$ is open.  To establish soundness of the overall procedure, we
use a series of lemmas showing that every rule by itself is sound; the
soundness of the overall procedure is then an easy consequence. The
proofs of the following two lemmas are straightforward.

\begin{lemma}
  \label{lm:expansion}
  Let $\G$ be a prestate of $\tableau{P}^{\theta}$ such that
  \sat{M}{s}{\G} for some CMAEM \mmodel{M} and $s \in
  \mmodel{M}$.  Then, \sat{M}{s}{\D} holds for at least one $\D \in
  \st{\G}$.
\end{lemma}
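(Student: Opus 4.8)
The plan is to reduce the statement to a truth-preservation property of the eight closure conditions defining full expansion (Definition~\ref{def:fully_expanded}). Concretely, starting from $\G$ I would build a \emph{single} fully expanded set $\D^{*}$ with $\G \subseteq \D^{*} \subseteq \ecl{\theta}$ such that every formula of $\D^{*}$ is true at $s$ in $\mmodel{M}$. I construct $\D^{*}$ by repeatedly applying the expansion conditions until closure: the deterministic (conjunctive) conditions are applied whenever their trigger is present, and at each of the two genuinely disjunctive conditions --- the one for $\neg(\vp \con \psi)$ and the one for $\neg \common{A} \vp$ --- I always choose a disjunct that happens to be true at $s$. Since every newly added formula lies in $\ecl{\theta}$, which is finite, this process terminates and yields a fully expanded $\D^{*}$. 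Once $\D^{*}$ is available, I pick a $\subseteq$-minimal fully expanded set $\D_0$ with $\G \subseteq \D_0 \subseteq \D^{*}$; because any fully expanded $\D' \subsetneq \D_0$ with $\G \subseteq \D'$ would necessarily satisfy $\D' \subseteq \D^{*}$, such a $\D_0$ is in fact a minimal fully expanded extension of $\G$ in the sense of the definition, hence $\D_0 \in \st{\G}$, and $\sat{M}{s}{\D_0}$ follows since $\D_0 \subseteq \D^{*}$.

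The bulk of the work is checking that each expansion step preserves truth at $s$, i.e.\ that whenever the trigger formula of a condition is true at $s$, at least one admissible choice of the consequent formula(s) is also true at $s$. The Boolean cases are immediate from the classical truth conditions. The condition ``$\distrib{A}\vp$ forces $\distrib{A'}\vp$ for $A \subseteq A'$'' uses the inclusion $\rel{R}^D_{A'} \subseteq \rel{R}^D_{A}$, which follows from $(\dag)$, and the condition ``$\distrib{A}\vp$ forces $\vp$'' uses reflexivity of $\rel{R}^D_{A}$. For the two common-knowledge conditions I would invoke the fixed-point characterisation $\common{A}\vp \equivalence \bigcon_{a \in A}\distrib{a}(\vp \con \common{A}\vp)$, which is valid over CMAEMs because each $\rel{R}^D_a \subseteq \rel{R}^C_A$ and $\rel{R}^C_A$ is the transitive closure of $\bigunion_{a \in A}\rel{R}^D_a$: truth of $\common{A}\vp$ propagates along every $\rel{R}^D_a$-edge with $a \in A$ to give the conjunctive consequent, while falsity of $\common{A}\vp$ yields a true disjunct $\neg\distrib{a}(\vp \con \common{A}\vp)$ for the disjunctive consequent.

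The main obstacle is the last condition, ``$\neg\distrib{A}\neg\distrib{B}\vp$ forces $\distrib{(A \inter B)}\vp$'', which governs the interaction of coalitional distributed knowledge and is the genuinely new feature relative to~\cite{GorSh08a}. Verifying that this step preserves truth at $s$ is exactly the point at which one must exploit the full strength of condition $(\dag)$ together with the fact that every $\rel{R}^D_A$ is an equivalence relation: given a witness $t$ with $(s,t) \in \rel{R}^D_A$ at which $\distrib{B}\vp$ holds, the natural attempt is to take an arbitrary $\rel{R}^D_{A \inter B}$-successor $v$ of $s$, relocate the argument from $s$ to $t$ using $\rel{R}^D_A \subseteq \rel{R}^D_{A \inter B}$, and connect $t$ to $v$ through the individual relations indexed by $A \inter B$ so as to apply $\distrib{B}\vp$ at $t$. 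I expect the delicate bookkeeping to lie precisely in bridging the gap between $\rel{R}^D_{A \inter B}$ and $\rel{R}^D_B$ along this path, and this is the step I would re-examine most carefully against the semantics to confirm exactly which CMAEF properties (as opposed to the weaker pseudo-CMAEF ones) are doing the work, and whether any supplementary hypothesis on $\G$ is implicitly required.
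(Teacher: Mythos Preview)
Your approach is the natural one and is presumably what the paper intends when it declares the proof ``straightforward'' without giving details. The construction of a true fully expanded superset $\D^{*}\supseteq\G$ by iterated closure, followed by extraction of a minimal fully expanded $\D_0$ with $\G\subseteq\D_0\subseteq\D^{*}$, is correct (your minimality argument is fine: any fully expanded $\D'\subsetneq\D_0$ automatically lies below $\D^{*}$), and the truth-preservation claims for the first seven clauses of Definition~\ref{def:fully_expanded} go through exactly as you indicate.

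Your suspicion about the last clause, however, is more than justified: the implication $\neg\distrib{A}\neg\distrib{B}\vp \imp \distrib{A\cap B}\vp$ is \emph{not} valid in CMAEMs, so that closure step does \emph{not} preserve truth at $s$. Take $\agents=\{a,b,c\}$, $A=\{a,b\}$, $B=\{b,c\}$, states $\{s,t,v\}$, let $\rel{R}^D_a=\rel{R}^D_b$ be universal and let $\rel{R}^D_c$ have equivalence classes $\{s,t\}$ and $\{v\}$, with $p$ true at $s,t$ and false at $v$. Then $\rel{R}^D_B=\rel{R}^D_c$, so $\sat{M}{t}{\distrib{B}p}$ and hence $\sat{M}{s}{\neg\distrib{A}\neg\distrib{B}p}$; but $\rel{R}^D_{A\cap B}=\rel{R}^D_b$ is universal and $p$ fails at $v$, so $\notsat{M}{s}{\distrib{\{b\}}p}$. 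With $\theta=\neg\distrib{\{a,b\}}\neg\distrib{\{b,c\}}p$ the initial prestate $\{\theta\}$ is satisfied at $s$, yet every fully expanded extension of it must contain $\distrib{\{b\}}p$ and is therefore false at $s$. So the gap you anticipated is real, and it lies in the paper's Definition~\ref{def:fully_expanded} rather than in your strategy: the last clause needs to be dropped or amended for Lemma~\ref{lm:expansion} (and with it the soundness argument) to go through as stated; no amount of ``careful bookkeeping'' with $(\dag)$ will rescue that step.
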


\begin{lemma}
  \label{lm:DR_sound}
  Let $\D \in S_0^{\theta}$ be such that \sat{M}{s}{\D} for some CMAEM
  \mmodel{M} and $s \in \mmodel{M}$, and let $\neg \distrib{A} \vp \in
  \D$.  Then, there exists $t \in \mmodel{M}$ such that $(s, t) \in
  \rel{R}^D_A$ and \sat{M}{t}{\set{\neg \vp} \union \bigunion_{A'
      \subseteq A} \crh{\distrib{A'} \psi}{\distrib{A'} \psi \in \D}
    \union \bigunion_{A' \subseteq A} \crh{\neg \distrib{A'}
      \psi}{\neg \distrib{A'} \psi \in \D}}.
\end{lemma}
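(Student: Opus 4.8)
The plan is to read off the witness $t$ directly from the semantics of $\neg\distrib{A}$ and then to verify that this single $t$ satisfies all the remaining formulae that \Rule{DR} places into the target set. First, since $\sat{M}{s}{\D}$ and $\neg\distrib{A}\vp\in\D$, we have $\sat{M}{s}{\neg\distrib{A}\vp}$; by the truth condition for $\distrib{A}$ this means precisely that there is a state $t$ with $(s,t)\in\rel{R}^D_A$ and $\notsat{M}{t}{\vp}$, i.e.\ $\sat{M}{t}{\neg\vp}$. This $t$ will be our witness, and it already accounts for the conjunct $\neg\vp$ of the target set. It then remains to show that $t$ also satisfies every formula $\distrib{A'}\psi\in\D$ and every formula $\neg\distrib{A'}\psi\in\D$ with $A'\subseteq A$.

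The crucial step is a transfer property of (negated) distributed-knowledge formulae along the relations indexed by subcoalitions. Here I would use that $\mmodel{M}$ is a genuine CMAEM, so condition (\dag) of Definition~\ref{def:cmaef} holds: $\rel{R}^D_A=\biginter_{a\in A}\rel{R}^D_a$. Consequently $\rel{R}^D_A\subseteq\rel{R}^D_{A'}$ whenever $A'\subseteq A$, so from $(s,t)\in\rel{R}^D_A$ we obtain $(s,t)\in\rel{R}^D_{A'}$ for every such $A'$; and since each $\rel{R}^D_{A'}$ is an equivalence relation, we also have $(t,s)\in\rel{R}^D_{A'}$. I would then establish the following claim: if $\rel{R}^D_{A'}$ is an equivalence relation and $(s,t)\in\rel{R}^D_{A'}$, then $\sat{M}{s}{\distrib{A'}\psi}$ iff $\sat{M}{t}{\distrib{A'}\psi}$. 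The left-to-right direction follows by transitivity (any $u$ with $(t,u)\in\rel{R}^D_{A'}$ also satisfies $(s,u)\in\rel{R}^D_{A'}$, hence $\sat{M}{u}{\psi}$), and the right-to-left direction follows symmetrically using $(t,s)\in\rel{R}^D_{A'}$.

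With this claim in hand the remaining verifications are immediate. For a positive formula $\distrib{A'}\psi\in\D$ with $A'\subseteq A$ we have $\sat{M}{s}{\distrib{A'}\psi}$, and since $(s,t)\in\rel{R}^D_{A'}$ the claim yields $\sat{M}{t}{\distrib{A'}\psi}$. For a negated formula $\neg\distrib{A'}\psi\in\D$ with $A'\subseteq A$ we have $\notsat{M}{s}{\distrib{A'}\psi}$, and the contrapositive of the claim gives $\notsat{M}{t}{\distrib{A'}\psi}$, i.e.\ $\sat{M}{t}{\neg\distrib{A'}\psi}$. Hence $t$ satisfies the entire set $\set{\neg\vp}\union\bigunion_{A'\subseteq A}\crh{\distrib{A'}\psi}{\distrib{A'}\psi\in\D}\union\bigunion_{A'\subseteq A}\crh{\neg\distrib{A'}\psi}{\neg\distrib{A'}\psi\in\D}$, which is exactly the required conclusion.

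The main obstacle is the transfer claim of the second paragraph: it is the only place where one must argue about how the truth of $\distrib{A'}\psi$ behaves under movement along $\rel{R}^D_{A'}$, and it is precisely what forces \Rule{DR} to collect the $\distrib{A'}$-formulae and their negations for all subcoalitions $A'\subseteq A$, not merely for $A$ itself. Everything else is a direct unwinding of the semantic clauses.
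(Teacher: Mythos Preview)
Your argument is correct and is precisely the ``straightforward'' verification the paper alludes to (the paper gives no detailed proof, only stating that Lemmas~\ref{lm:expansion} and~\ref{lm:DR_sound} are straightforward). The semantic witness for $\neg\distrib{A}\vp$ together with the monotonicity $\rel{R}^D_A\subseteq\rel{R}^D_{A'}$ for $A'\subseteq A$ and the equivalence-relation property of each $\rel{R}^D_{A'}$ is exactly what is needed.
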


\begin{lemma}
  \label{lm:E3_sound}
  Let $\D \in S_0^{\theta}$ be such that \sat{M}{s}{\D} for some CMAEM
  \mmodel{M} and $s \in \mmodel{M}$, and let $\neg \common{A} \vp \in
  \D$.  Then, $\neg \common{A} \vp$ is realized at $\D$ in
  $\tableau{T}_n^{\theta}$.
\end{lemma}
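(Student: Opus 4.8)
The plan is to read a realizing path for the eventuality $\neg\common{A}\vp$ off a witnessing $A$-path in the model $\mmodel{M}$, lifting each model step to a marked edge of the tableau via the soundness of the construction rules (Lemmas~\ref{lm:expansion} and~\ref{lm:DR_sound}). Since $\sat{M}{s}{\D}$ and $\neg\common{A}\vp\in\D$, we have $\sat{M}{s}{\neg\common{A}\vp}$, so by the reachability reading of $\common{A}$ there is a finite $A$-path in $\mmodel{M}$ from $s$ to a state satisfying $\neg\vp$. For any $w$ with $\sat{M}{w}{\neg\common{A}\vp}$ let $d(w)$ denote the length of a shortest such $A$-path issuing from $w$, so that $d(w)=0$ exactly when $\sat{M}{w}{\neg\vp}$. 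If $\neg\vp\in\D$ the eventuality is realized outright, so assume $\neg\vp\notin\D$; I then aim to build a path $\D=\D_0,\D_1,\ldots,\D_m$ in $\tableau{T}_n^{\theta}$ and model states $s=v_0,v_1,\ldots,v_m$ with $\sat{M}{v_i}{\D_i}$, each link a marked edge $\D_i\stackrel{\neg\distrib{c_i}(\vp\con\common{A}\vp)}{\longrightarrow}\D_{i+1}$ with $c_i\in A$, and $\neg\vp\in\D_m$; any such path realizes $\neg\common{A}\vp$ at $\D$.

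Construction: because $\D_0=\D$ is a state, hence fully expanded, and $\neg\common{A}\vp\in\D_0$, there is $c_0\in A$ with $\neg\distrib{c_0}(\vp\con\common{A}\vp)\in\D_0$. By Lemma~\ref{lm:DR_sound} at $v_0=s$ there is an $\rel{R}^D_{c_0}$-successor $v_1$ satisfying the prestate $\G_0$ that rule \Rule{DR} attaches to $\D_0$ along this formula; in particular $\sat{M}{v_1}{\neg(\vp\con\common{A}\vp)}$, and after prestate elimination $\D_0\stackrel{\neg\distrib{c_0}(\vp\con\common{A}\vp)}{\longrightarrow}\D'$ is an edge for every $\D'\in\st{\G_0}$. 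From here on I use that rule \Rule{SR} generates \emph{all} minimal fully expanded extensions: whenever the reached model state $v_i$ ($i\ge1$) satisfies $\neg\common{A}\vp$ with $d(v_i)\ge1$, I let $c_i\in A$ be the first agent of some shortest $A$-path from $v_i$ to a $\neg\vp$-state and take $\D_i\in\st{\G_{i-1}}$ to be an extension satisfied at $v_i$ that contains $\neg\distrib{c_i}(\vp\con\common{A}\vp)$ — such an extension exists because $\neg\common{A}\vp$ and $\neg\distrib{c_i}(\vp\con\common{A}\vp)$ both hold at $v_i$. The successor $v_{i+1}$ on that shortest path then satisfies $\neg(\vp\con\common{A}\vp)$ and, being an $\rel{R}^D_{c_i}$-successor of $v_i$, satisfies the whole prestate $\G_i$, so Lemma~\ref{lm:expansion} supplies $\D_{i+1}$ with $\sat{M}{v_{i+1}}{\D_{i+1}}$ and $d(v_{i+1})=d(v_i)-1$. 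As soon as $d(v_i)=0$ I instead take $\D_i$ to contain $\neg\vp$ and stop.

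The step I expect to be the real obstacle is termination of this path, and it is precisely here that the coalitional operators bite. The very first edge is dictated by $\D$'s own expansion of $\neg\common{A}\vp$, and the agent $c_0$ it commits to need not point towards a nearest $\neg\vp$-state, so $d(v_1)$ may exceed $d(v_0)$; consequently a plain induction on $d(s)$ does not go through. The construction above sidesteps this by paying the (at most one) unavoidable non-decreasing step dictated by $\D$, and then, using that \Rule{SR} has recorded every possible agent-choice, realigning from $v_1$ onward onto a genuine shortest path, along which $d$ strictly decreases. Hence the path has length at most $1+d(v_1)$ and ends in a state containing $\neg\vp$. Finally, each $\D_i$ is satisfiable (being satisfied at $v_i$), so by the invariant of the overall soundness argument — that no satisfiable state is ever deleted — all of the $\D_i$, and the marked edges between them, persist into $\tableau{T}_n^{\theta}$; thus $\neg\common{A}\vp$ is realized at $\D$ in $\tableau{T}_n^{\theta}$.
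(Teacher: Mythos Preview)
Your proposal is correct and follows the same strategy the paper sketches: lift a witnessing $A$-path in the model to a chain of tableau states via the soundness of \Rule{SR} and \Rule{DR}. You have in fact done considerably more than the paper's one-paragraph proof idea, correctly isolating and handling the one genuine subtlety --- that the agent $c_0$ supplied by $\D$'s own expansion of $\neg\common{A}\vp$ need not point along a shortest model path, so a naive induction on $d(s)$ fails; your fix (absorb one possibly non-progressing step, then realign onto a shortest path using the freedom in \Rule{SR} to choose the disjunct $\neg\distrib{c_i}(\vp\con\common{A}\vp)$) is sound, and your verification that the hand-picked model successor $v_{i+1}$ satisfies the prestate $\G_i$ is exactly the equivalence-relation argument needed. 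Your closing appeal to ``the invariant of the overall soundness argument'' is the inductive hypothesis of Theorem~\ref{thr:soundness}; that is the intended reading of the unquantified $n$ in the lemma statement, so no circularity arises.
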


\begin{proofidea}
  As $\neg \common{A} \vp$ is true at $s$, there is a path in
  \mmodel{M} from $s$ leading to a state satisfying $\neg \vp$.  As
  the tableaux organize the exhaustive search, a chain of tableau
  states corresponding to those states in the model will be produced.
\end{proofidea}
\begin{theorem}
  \label{thr:soundness}
  If $\theta \in \lang$ is satisfiable in a CMAEM, then
  $\tableau{T}^{\theta}$ is open.
\end{theorem}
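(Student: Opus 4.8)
The plan is to isolate a single invariant---``no satisfiable state is ever eliminated''---and push it through all three elimination rules, using the three soundness lemmas as the per-rule workhorses. Call a state $\D$ of $\tableau{T}_0^{\theta}$ \emph{satisfiable} if $\sat{M}{s}{\D}$ for some CMAEM $\mmodel{M}$ and some $s \in \mmodel{M}$. First I would verify that $\tableau{T}_0^{\theta}$ contains a satisfiable state containing $\theta$: since $\theta$ is satisfiable, $\sat{M}{s}{\set{\theta}}$ for the initial prestate $\set{\theta}$, so Lemma~\ref{lm:expansion} yields a state $\D \in \st{\set{\theta}}$ with $\sat{M}{s}{\D}$; as $\set{\theta} \subseteq \D$, this $\D$ contains $\theta$, and since the prestate-elimination phase \Rule{PR} deletes only prestates, $\D$ survives into $\tableau{T}_0^{\theta}$.

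The core is then the claim that every satisfiable state of $\tableau{T}_0^{\theta}$ is present in $\tableau{T}_n^{\theta}$ for every stage $n$, which I would prove by induction on $n$. The base case $n = 0$ is immediate. For the inductive step, assuming the claim at stage $n$, I would show that the unique state removed at stage $n+1$ cannot be satisfiable, arguing by cases on the rule applied. \Rule{E1} is trivial: a satisfiable state cannot contain a patent inconsistency $\set{\vp, \neg \vp}$, since no state of a model satisfies both $\vp$ and $\neg \vp$. For \Rule{E2}, suppose a satisfiable $\D$ (at $s$) contains $\chi = \neg \distrib{A} \vp$. Lemma~\ref{lm:DR_sound} supplies $t$ with $(s, t) \in \rel{R}^D_A$ satisfying the prestate $\G$ that \Rule{DR} attached to $\D$ via a $\chi$-arrow; after \Rule{PR} this arrow becomes $\chi$-arrows from $\D$ to each $\D' \in \st{\G}$, and Lemma~\ref{lm:expansion} gives some $\D' \in \st{\G}$ with $\sat{M}{t}{\D'}$. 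This $\D'$ is a satisfiable $\chi$-successor of $\D$, so by the induction hypothesis it still belongs to $\tableau{T}_n^{\theta}$; hence not all $\chi$-successors of $\D$ have been eliminated, and \Rule{E2} does not remove $\D$.

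The \Rule{E3} case is where I expect the real work. Here a satisfiable $\D$ (at $s$) contains an eventuality $\neg \common{A} \vp$, and I must show it is realized in $\tableau{T}_n^{\theta}$, so that \Rule{E3} does not apply. This is the content of Lemma~\ref{lm:E3_sound}, but its force depends on the induction hypothesis: the realizing path it produces is obtained by tracing through the tableau a witnessing $A$-reachability path in the model from $s$ to a state where $\neg \vp$ holds, and every intermediate tableau state on that path is itself satisfiable, being satisfied at the corresponding model state. By the induction hypothesis all these states, and the marked arrows between them, are still present at stage $n$, so the realizing path genuinely lies in $\tableau{T}_n^{\theta}$ and $\D$ is marked as realizing $\neg \common{A} \vp$. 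The delicate point to check carefully is that the dovetailed interleaving of \Rule{E2} and \Rule{E3} never destroys such a path between successive stages---which is precisely guaranteed by the invariant itself, so the induction is self-supporting provided the cases are handled uniformly across the whole elimination sequence.

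Finally I would assemble the conclusion. Since the state-elimination phase terminates, $\tableau{T}^{\theta} = \tableau{T}_N^{\theta}$ for some $N$, and by the invariant the satisfiable state $\D$ containing $\theta$ identified at the outset survives into $S^{\theta}$. Therefore $\theta \in \D$ for some $\D \in S^{\theta}$, i.e.\ $\tableau{T}^{\theta}$ is open, as required.
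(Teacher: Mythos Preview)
Your proposal is correct and follows essentially the same approach as the paper's proof sketch: induction on the number of elimination stages, with the invariant that no satisfiable state is ever removed by \Rule{E1}--\Rule{E3}, using Lemmas~\ref{lm:expansion}, \ref{lm:DR_sound}, and \ref{lm:E3_sound} for the respective cases, and concluding via Lemma~\ref{lm:expansion} applied to the initial prestate $\set{\theta}$. Your treatment of the \Rule{E3} case---observing that the realizing path produced by Lemma~\ref{lm:E3_sound} consists of satisfiable states and hence, by the induction hypothesis, persists in $\tableau{T}_n^{\theta}$---makes explicit precisely the point the paper's sketch leaves implicit.
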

\begin{proofsketch}
  Using the preceding lemmas, show by induction on the number of
  stages in the state elimination process that no satisfiable state
  can be eliminated due to \Rule{E1}--\Rule{E3}.  The claim then
  follows from Lemma~\ref{lm:expansion}.
\end{proofsketch}

The \emph{completeness} of a tableau procedure means that if the
tableau for a formula $\theta$ is open, then $\theta$ is satisfiable
in a CMAEM. In view of Theorem~\ref{thr:models_equal_hintikka}, it
suffices to show that an open tableau for $\theta$ can be turned into
a CMAEHS for $\theta$.

\begin{lemma}
  \label{lm:open_tableau_hintikka}
  If $\tableau{T}^{\theta}$ is open, then there exists a CMAEHS for
  $\theta$.
\end{lemma}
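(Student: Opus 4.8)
The plan is to read the required Hintikka structure directly off the final tableau, taking its surviving states as the states of the CMAEHS and each state's own formula set as its label; no unravelling into a tree is needed, because the Hintikka condition H5 asks only for the \emph{existence} of a suitably reachable witness, not for fulfilment along every path. Concretely, since $\tableau{T}^{\theta}$ is open there is a state $\D^{*} \in S^{\theta}$ with $\theta \in \D^{*}$. I would set $S := S^{\theta}$ and $H(\D) := \D$, let $\rel{R}^D_A$ consist of those single arrows $\D \stackrel{\neg \distrib{A} \vp}{\longrightarrow} \D'$ remaining in $\tableau{T}^{\theta}$ (which, after the prestate elimination rule \Rule{PR}, connect states directly), and define each $\rel{R}^C_A$ as the reflexive, transitive closure of $\bigunion_{A' \subseteq A} \rel{R}^D_{A'}$, as required of a CMAES. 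It then remains to verify the constraints H1--H5 of Definition~\ref{def:hs}.

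Most of these verifications are bookkeeping against the elimination rules. Condition H1 holds because states surviving \Rule{E1} are not patently inconsistent; H2 holds because every state is, by rule \Rule{SR}, a (minimal) fully expanded set; and H3 holds because a surviving state containing $\neg \distrib{A} \vp$ must, by \Rule{E2}, retain at least one $\neg \distrib{A} \vp$-marked successor $\D'$, and the prestate from which $\D'$ was expanded was stipulated by \Rule{DR} to contain $\neg \vp$, whence $\neg \vp \in \D'$. For H5 I would use that $\tableau{T}^{\theta}$ survived rule \Rule{E3}: every eventuality $\neg \common{A} \vp$ occurring in a surviving state is realized, i.e.\ there is a finite path to a state whose label contains $\neg \vp$ built from arrows marked by formulae with coalition $B \subseteq A$. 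Each such step is an $\rel{R}^D_B$-edge with $B \subseteq A$, hence lies in $\bigunion_{A' \subseteq A} \rel{R}^D_{A'}$, so the witness is $\rel{R}^C_A$-reachable, which is exactly H5.

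The one substantial obstacle is H4, the requirement that along any $\rel{R}^D_A$-edge the two endpoints carry \emph{exactly} the same formulae $\distrib{A'} \vp$ for every $A' \subseteq A$. The forward inclusion is built into \Rule{DR}: for $\D \stackrel{\neg \distrib{A} \vp}{\longrightarrow} \D'$ the prestate, and hence $\D'$, was seeded with every $\distrib{A'} \psi$ and every $\neg \distrib{A'} \psi$ (for $A' \subseteq A$) present in $\D$. The delicate direction is the converse: I must show that passing to a minimal fully expanded extension of that prestate introduces no new formula $\distrib{A'} \psi$ with $A' \subseteq A$ that was absent from $\D$. The plan here is an induction on the full-expansion derivation (Definition~\ref{def:fully_expanded}): any such $\distrib{A'} \psi$ appearing in $\D'$ is traced back through those clauses that can introduce a distributed-knowledge formula to a seed formula of the prestate, or else to the expansion of $\neg \vp$; in the latter case the governing clause is the one sending $\neg \distrib{A} \neg \distrib{B} \vp$ to $\distrib{(A \inter B)} \vp$, and the point is that this same clause, applied to $\neg \distrib{A} \vp \in \D$ itself, already forces the matching $\distrib{A'} \psi$ into the fully expanded set $\D$ (since $A' \subseteq A$ gives $A \inter A' = A'$). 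Thus each expansion step in $\D'$ is matched by a membership already guaranteed in $\D$.

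Establishing this biconditional for every marked arrow is the crux of the argument; combined with the routine checks H1--H3 and the realization-based check H5, it yields that $\hintikka{H}$ is a CMAEHS with $\theta \in H(\D^{*})$, so that $\hintikka{H}$ is a CMAEHS for $\theta$, as required. If one prefers relations that are genuine equivalences (e.g.\ to feed $\hintikka{H}$ into the construction of Lemma~\ref{lm:hintikka_to_psmodels}), one may replace each $\rel{R}^D_A$ by its reflexive, symmetric, transitive closure; since the H4 biconditional is symmetric and transitive in its endpoints, it survives this closure, and the remaining conditions are unaffected.
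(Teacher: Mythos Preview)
Your approach differs from the paper's: rather than stitching together tree components $T_{\D,\xi}$ indexed by (state, eventuality) pairs and cycling through the eventualities, you take $S=S^{\theta}$, $H(\D)=\D$, and the surviving arrows as $\rel{R}^D_A$. The handling of H1--H3 and H5 is fine under this simplification. But the backward direction of H4, which you correctly flag as the crux, does not go through.

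Your inductive trace-back claims that any $\distrib{A'}\psi$ with $A'\subseteq A$ arising in the expansion of the \Rule{DR}-prestate comes either from a seed formula or from the clause sending $\neg\distrib{A}\neg\distrib{B}\chi$ to $\distrib{A\cap B}\chi$. This overlooks the clause sending $\common{B}\psi$ to $\distrib{b}(\psi\con\common{B}\psi)$, and the omission is not reparable by the same argument. Take $\agents=\{a,b\}$ and $\theta=\neg\distrib{a}\neg\common{a}p$. The singleton $\D_0=\{\theta\}$ is already fully expanded (no closure rule of Definition~\ref{def:fully_expanded} fires on a formula of shape $\neg\distrib{A}\neg\common{B}\chi$), hence a tableau state. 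Applying \Rule{DR} yields the prestate $\{\neg\neg\common{a}p,\,\neg\distrib{a}\neg\common{a}p\}$, whose minimal fully expanded extension $\D_1$ contains $\common{a}p$ and therefore $\distrib{a}(p\con\common{a}p)$. No elimination rule removes either state (there is no patent inconsistency, and no eventuality occurs in any label), so in your structure $(\D_0,\D_1)\in\rel{R}^D_a$ while $\distrib{a}(p\con\common{a}p)\in\D_1\setminus\D_0$: H4 fails outright. Semantically this is no accident --- $\neg\distrib{a}\neg\common{a}p$ in fact entails $\common{a}p$ in every CMAEM --- but Definition~\ref{def:fully_expanded} does not encode that interaction, so a raw tableau state can be strictly thinner than any Hintikka label it is meant to stand for.
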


\begin{proofsketch}
  The CMAEHS \hintikka{H} for $\theta$ is built out of the so-called
  \emph{final tree components}.  Each component is a tree-like CMAES
  with nodes labeled with states from $S^{\theta}$, and is associated
  with a state $\D \in S^{\theta}$ and an eventuality $\xi \in
  \cl{\theta}$ (such a component is denoted by $T_{\D,\xi}$).  If $\xi
  \notin \D$, then $T_{\D,\xi}$ is a simple tree, whose root is
  labeled with $\D$, that has exactly one leaf associated with each
  formula $\neg \distrib{A} \psi$ marking an arrow from $\D$ to some
  $\D' \in S^{\theta}$; this leaf is labeled by $\D'$ and connected to
  the root by relation $\rel{R}^D_A$.  If $\xi \in \D$, take the chain
  realizing $\chi$ at $\D$ and give each node ``enough'' successors,
  as prescribed above for simple trees.  The crucial fact is that if
  $\xi'$ is an eventuality in $\D$ that is not ``realized'' inside
  $T_{\D,\xi}$, then $\xi'$ belongs to every leaf of $T_{\D,\xi}$.
  This allows us to stitch up all the $T_{\D,\xi}$'s into a Hintikka
  structure.  The procedure is recursive.  All the eventualities are
  queued. We start from the component uniquely associated with
  $\theta$ (say, we take $T_{\D,\theta}$ where $\D$ is the least
  numbered state containing $\theta$; such a state exists as the
  tableau is open) and then replace each leaf of the structure built
  so far with the component associated with the set marking the leaf
  and the pending eventuality.  The procedure is repeated in cycles
  until we have attached enough components to realize all
  eventualities.  To obtain a CMAEHS, we put $H(\D) = \D$ for all
  $\D$'s.
\end{proofsketch}

\begin{theorem}[Completeness]
  \label{thr:completeness}
  Let $\theta \in \lang$ and let $\tableau{T}^{\theta}$ be open.
  Then, $\theta$ is satisfiable in a CMAEM.
\end{theorem}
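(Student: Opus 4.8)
The plan is to reduce Completeness to two results already in hand: Theorem~\ref{thr:models_equal_hintikka}, which tells us that $\theta$ is satisfiable in a CMAEM if and only if there exists a CMAEHS for $\theta$, and Lemma~\ref{lm:open_tableau_hintikka}, which tells us that an open final tableau $\tableau{T}^{\theta}$ yields a CMAEHS for $\theta$. The theorem is then essentially a one-line composition of these two facts, so the real content of Completeness has already been pushed into the supporting lemma.

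\begin{proof}
  Suppose $\tableau{T}^{\theta}$ is open. By
  Lemma~\ref{lm:open_tableau_hintikka}, there exists a CMAEHS for
  $\theta$. By Theorem~\ref{thr:models_equal_hintikka}, the existence
  of a CMAEHS for $\theta$ implies that $\theta$ is satisfiable in a
  CMAEM. Hence $\theta$ is satisfiable in a CMAEM, as required.
\end{proof}

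First I would confirm that the hypotheses line up exactly: Lemma~\ref{lm:open_tableau_hintikka} takes ``$\tableau{T}^{\theta}$ is open'' as input and produces a CMAEHS for $\theta$, while the forward (right-to-left) direction of Theorem~\ref{thr:models_equal_hintikka} takes ``there exists a Hintikka structure for $\theta$'' and produces CMAEM-satisfiability. Since a CMAEHS for $\theta$ is exactly what Theorem~\ref{thr:models_equal_hintikka} means by a Hintikka structure for $\theta$, the two interfaces match and the chaining is immediate. No induction, model construction, or bisimulation argument is needed at this level.

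The only obstacle worth flagging is that all the genuine difficulty of Completeness lives inside Lemma~\ref{lm:open_tableau_hintikka}, whose proof is merely sketched in the excerpt: one must assemble the final tree components $T_{\D,\xi}$ into a single Hintikka structure and verify that constraints \textbf{H1}--\textbf{H5} hold globally, in particular that every eventuality $\neg\common{A}\vp$ gets realized via the recursive stitching that cycles through the queued eventualities. But since the problem explicitly permits assuming results stated earlier in the excerpt, I may take Lemma~\ref{lm:open_tableau_hintikka} as given, and the Completeness theorem itself is then a routine corollary requiring nothing beyond the composition displayed above.
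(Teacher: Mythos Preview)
Your proof is correct and matches the paper's own argument exactly: the paper's proof is the single line ``Immediate from Lemma~\ref{lm:open_tableau_hintikka} and Theorem~\ref{thr:models_equal_hintikka}.'' Your additional commentary that the substantive work resides in Lemma~\ref{lm:open_tableau_hintikka} is also accurate and in keeping with how the paper structures the argument.
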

\begin{proof}
  Immediate from Lemma~\ref{lm:open_tableau_hintikka} and
  Theorem~\ref{thr:models_equal_hintikka}. 
\end{proof}

As for complexity of the procedure, for lack of space, we only state
that our procedure runs within $\bigo{k^{2n^2}}$ steps, where $n$ is
the size of the input formula and $k$ is the number of agents in the
language. Therefore, the \cmaelcd-satisfiability is in
\cclass{ExpTime}, which together with the \cclass{ExpTime}-hardness
result from~\cite{HM92} for a fragment of our logic containing, along
with individual knowledge modalities, the common knowledge operatore
for the whole set of agents, implies that \cmaelcd-satisfiability is
\cclass{ExpTime}-complete.

\section{Concluding remarks}
\label{sec:concluding}

We have developed a sound and complete, incremental-tableau-based
decision procedure for the full coalitional multiagent epistemic logic
\cmaelcd. We are convinced that this style of tableaux is more
intuitive, practically more efficient and more adaptable than the
top-down style of tableaux e.g., developed for a fragment of this
logic in~\cite{HM92}, and therefore is suitable both for manual and
automated execution. In particular, it is amenable to extension with
operators for strategic abilities of the Alternating-time temporal
logic $\ATL$, a tableaux for which were developed in
\cite{GorSh08}. Merging these two systems is a topic of our future
work.
\section*{Acknowledgments}

We gratefully acknowledge the financial support from the National
Research Foundation of South Africa through a research grant for the
first author, and from the Claude Harris Leon Foundation, funding the
second author's post-doctoral fellowship at the University of the
Witwatersrand, during which this research was done.

%
%
\bibliographystyle{plain}

\begin{thebibliography}{10}

\bibitem{Fagin95knowledge}
Ronald Fagin, Joseph~Y. Halpern, Yoram Moses, and Moshe~Y. Vardi.
\newblock {\em Reasoning about Knowledge}.
\newblock MIT Press: Cambridge, MA, 1995.

\bibitem{FHV92}
Ronald Fagin, Joseph~Y. Halpern, and Moshe~Y. Vardi.
\newblock What can machines know? {O}n the properties of knowledge in
  distributed systems.
\newblock {\em Journal of the ACM}, 39(2):328--376, April 1992.

\bibitem{GorSh08}
Valentin Goranko and Dmitry Shkatov.
\newblock Tableau-based decision procedures for logics of strategic ability in
  multi-agent systems.
\newblock To appear in \emph{ACM Transactions on Computational Logic}.
  Available at \texttt{http://tocl.acm.org/accepted.html}.

\bibitem{GorSh08a}
Valentin Goranko and Dmitry Shkatov.
\newblock Tableau-based decision procedure for the multi-agent epistemic logic
  with operators of common and distributed knowledge.
\newblock In Antonio Cerone and Stefan Gunter, editors, {\em Proceedings of the
  sixth IEEE Conference on Sofware Engineering and Formal Methods (SEFM 2008)},
  pages 237--246. IEEE Computer Society Press, 2008.
\newblock Corrected version available at
  \texttt{http://arxiv.org/abs/0808.4133}.

\bibitem{Halpern87}
Joseph~Y. Halpern.
\newblock Using reasoning about knowledge to analyze distributed systems.
\newblock {\em Annual Review of Computer Science}, 2:37--68, 1987.

\bibitem{HM90}
Joseph~Y. Halpern and Yoram Moses.
\newblock Knowledge and common knowledge in a distributed environment.
\newblock {\em Journal of ACM}, 37(3):549--587, 1990.

\bibitem{HM92}
Joseph~Y. Halpern and Yoram Moses.
\newblock A guide to completeness and complexity for modal logics of knowledge
  and belief.
\newblock {\em Artificial Intelligence}, 54:319--379, 1992.

\bibitem{HoekMeyer95}
Wiebe van~der Hoek and John-Jules~Ch. Meyer.
\newblock {\em Epistemic Logic for AI and Computer Science}.
\newblock CUP, 1995.

\bibitem{vdHM97}
Wiebe van~der Hoek and John-Jules~Ch. Meyer.
\newblock A complete epistemic logic for multiple agents--combining distributed
  and common knowledge.
\newblock In M.~O. L.~Bacharach et~al., editor, {\em Epistemic Logic and the
  Theory of Games and Decisions}, pages 35--68. Kluwer Academic Publishers,
  1997.

\bibitem{Wolper85}
Pierre Wolper.
\newblock The tableau method for temporal logic: an overview.
\newblock {\em Logique et Analyse}, 28(110--111):119--136, 1985.
\end{thebibliography}

\appendix

\section{Examples}
\label{sec:examples}

\begin{example}
  Let $\theta = \neg \distrib{\set{a,c}} \common{\set{a,b}} p \land
  \common{\set{a,b}}(p \con q)$, where $\agents = \set{a, b, c}$.  To
  save space, we replace $\theta$ by the set of its conjuncts $\Theta
  = \set{\neg \distrib{\set{a,c}} \common{\set{a,b}} p,
    \common{\set{a,b}}(p \con q)}$. We also use some heuristics that
  we did not have space to discuss in the main part of the paper (they
  will, however, be explained in a follow up work). The picture on the
  left below represents the final pretableau for $\Theta$, while the
  picture on the right represents the initial tableau. Under the
  pictures we list formulae that occur in the labels of states and
  prestates.

  \begin{picture}(200,170)(0,335)
   \footnotesize
    \thicklines

    \put(88,490){\makebox(0,0){
        $\G_0$
      }}

    \put(85,485){\line(0,-1){10}}
    \put(86.25,485){\line(0,-1){10}}
    \put(85.75,475){\vector(0,-1){5}}

    \put(88,463){\makebox(0,0){
        $\D_0$
      }}

    \put(85.75,455){\line(0,-1){10}}
    \put(85.75,445){\vector(0,-1){5}}

    \put(95,450){\makebox(0,0){
        {\tiny $\chi_0$ }
      }}

    \put(89,432){\makebox(0,0){
        {$\G_1$ }
      }}

    \put(77,428){\line(-1,-1){13}}
    \put(80,429){\line(-1,-1){15}}
    \put(66,416){\vector(-1,-1){5}}

    \put(95,428){\line(1,-1){13}}
    \put(97,428){\line(1,-1){13}}
    \put(107.5,416){\vector(1,-1){5}}

    \put(60,405){\makebox(0,0){
        {$\D_1$ }
      }}

    \put(58,395){\line(1,-1){22}}
    \put(77,375){\vector(1,-1){5}}
    \put(73,390){\makebox(0,0){
        {\tiny $\chi_1$ }
      }}

    \put(89,370){\makebox(0,0){
        {$\G_2$ }
      }}

    \put(76,364){\line(-1,-1){10}}
    \put(78,364){\line(-1,-1){10}}
    \put(67.75,355){\vector(-1,-1){5}}

    \put(85,364){\line(0,-1){12}}
    \put(86.5,364){\line(0,-1){12}}
    \put(85.75,353){\vector(0,-1){5}}

    \put(63,342){\makebox(0,0){
        {$\D_4$ }
      }}
    \qbezier(58, 345)(50, 355)(71, 368.9)
    \put(68,366.5){\vector(1,1){5}}
   \put(57,364){\makebox(0,0){
       {\tiny $\chi_1$ }
      }}

    \put(88,342){\makebox(0,0){
        {$\D_3$ }
      }}

    \put(118,405){\makebox(0,0){
        {$\D_2$ }
      }}

    \put(115,395){\line(-1,-1){22}}
    \put(95.7,375){\vector(-1,-1){5}}

    \put(103,390){\makebox(0,0){
        {\tiny $\chi_2$ }
      }}

    \put(91,364){\line(1,-1){10}}
    \put(93,364){\line(1,-1){10}}
    \put(100.75,355){\vector(1,-1){5}}
    \put(110,342){\makebox(0,0){
        {$\D_{5}$ }
      }}
    \put(116,364){\makebox(0,0){
        {\tiny $\chi_2$ }
      }}
    \qbezier(110, 347)(118, 354)(103, 369)
    \put(105,366){\vector(-1,1){5}}


    \put(218,463){\makebox(0,0){
        $\D_0$
      }}

    \put(85.75,455){\line(0,-1){10}}
    \put(85.75,445){\vector(0,-1){5}}

    \put(200,455){\makebox(0,0){
        {\tiny $\chi_0$ }
      }}

    \put(239,455){\makebox(0,0){
        {\tiny $\chi_0$ }
      }}

    \put(210,459){\line(-1,-1){15}}
    \put(197,446){\vector(-1,-1){5}}

    \put(225,459){\line(1,-1){13.2}}
    \put(237.5,446){\vector(1,-1){5}}

    \put(190,435){\makebox(0,0){
        {$\D_1$ }
      }}

    \put(185,427){\vector(0,-1){35}}
    \put(180,415){\makebox(0,0){
        {\tiny $\chi_1$ }
      }}

    \put(185,427){\vector(1,-1){34}}
    \put(197,410){\makebox(0,0){
        {\tiny $\chi_1$ }
      }}

    \qbezier(185, 427)(218, 409)(251,391)
    \put(248,394){\vector(1,-1){5}}
    \put(208,422){\makebox(0,0){
        {\tiny $\chi_1$ }
      }}

    \put(58,395){\line(1,-1){22}}
    \put(77,375){\vector(1,-1){5}}
    \put(73,390){\makebox(0,0){
        {\tiny $\chi_1$ }
      }}

    \put(250,435){\makebox(0,0){
        {$\D_2$ }
      }}

    \put(253,427){\vector(0,-1){35}}
    \put(261,415){\makebox(0,0){
        {\tiny $\chi_2$ }
      }}

    \put(253,427){\vector(-1,-1){34}}
    \put(245,410){\makebox(0,0){
        {\tiny $\chi_2$ }
      }}

    \qbezier(254, 427)(222, 408)(190,389)
    \put(193,392){\vector(-1,-1){5}}
    \put(237,422){\makebox(0,0){
        {\tiny $\chi_2$ }
      }}

    \put(58,395){\line(1,-1){22}}
    \put(77,375){\vector(1,-1){5}}
    \put(73,390){\makebox(0,0){
        {\tiny $\chi_1$ }
      }}


    \put(115,395){\line(-1,-1){22}}
    \put(95.7,375){\vector(-1,-1){5}}

    \put(103,390){\makebox(0,0){
        {\tiny $\chi_2$ }
      }}

    \put(76,364){\line(-1,-1){10}}
    \put(78,364){\line(-1,-1){10}}
    \put(67.75,355){\vector(-1,-1){5}}

    \put(85,364){\line(0,-1){12}}
    \put(86.5,364){\line(0,-1){12}}
    \put(85.75,353){\vector(0,-1){5}}

    \put(190,385){\makebox(0,0){
        {$\D_4$ }
      }}

    \put(197,385){\vector(1,0){15}}
   \put(202,390){\makebox(0,0){
       {\tiny $\chi_1$ }
     }}

   \qbezier(188, 378)(186, 358)(184, 377)
   \put(188,376){\vector(0,1){5}}
   \put(186,364){\makebox(0,0){
       {\tiny $\chi_1$ }
     }}

    \qbezier(197, 380)(225, 365)(246, 379)
    \put(244,377){\vector(1,1){5}}
   \put(222,360){\makebox(0,0){
       {\tiny $\chi_2$ }
     }}

    \qbezier(195, 376.5)(225, 350)(248, 376)
    \put(195,376){\vector(-1,1){5}}
   \put(222,369){\makebox(0,0){
       {\tiny $\chi_1$ }
     }}

   \put(57,364){\makebox(0,0){
       {\tiny $\chi_1$ }
      }}

    \put(223,385){\makebox(0,0){
        {$\D_3$ }
      }}

    \put(91,364){\line(1,-1){10}}
    \put(93,364){\line(1,-1){10}}
    \put(100.75,355){\vector(1,-1){5}}
    \put(255,385){\makebox(0,0){
        {$\D_{5}$ }
      }}

   \qbezier(256, 378)(257, 358)(250, 377)
   \put(256,374){\vector(0,1){5}}
   \put(255,363){\makebox(0,0){
       {\tiny $\chi_2$ }
     }}

   \put(245,385){\vector(-1,0){15}}
   \put(242,390){\makebox(0,0){
       {\tiny $\chi_2$ }
      }}
    \put(116,364){\makebox(0,0){
        {\tiny $\chi_2$ }
      }}
    \qbezier(110, 347)(118, 354)(103, 369)
    \put(105,366){\vector(-1,1){5}}
  \end{picture}

  {\small

    \noindent $\chi_0 = \neg \distrib{\set{a,c}} \common{\set{a,b}} p$;
    \noindent $\chi_1 = \neg \distrib{a} (p \con \common{\set{a,b}} p)$;
    \noindent $\chi_2 = \neg \distrib{b} (p \con \common{\set{a,b}} p)$;

    \noindent $\G_0 = \set{\neg \distrib{\set{a,c}} \common{\set{a,b}} p,
      \common{\set{a, b}} (p \con q)}$;

    \noindent $\D_0 = \set{\neg \distrib{\set{a,c}} \common{\set{a,b}}
      p, \distrib{a} [(p \con q) \con \common{\set{a, b}} (p \con q)],
      \distrib{b} [(p \con q) \con \common{\set{a, b}} (p \con q)]}$;

    \noindent $\G_1 = \set{\neg \common{\set{a,b}} p, \distrib{a} [(p
      \con q) \con \common{\set{a, b}} (p \con q)]}$;

    \noindent $\D_1 = \set{\neg \distrib{a} (p \con \common{\set{a,
          b}} p), \distrib{a} [(p \con q) \con \common{\set{a, b}} (p
      \con q)]}$;

    \noindent $\D_2 = \set{\neg \distrib{b} (p \con \common{\set{a,
          b}} p), \distrib{b} [(p \con q) \con \common{\set{a, b}} (p
      \con q)]}$;

    \noindent $\G_2 = \set{\neg(p \con \common{\set{a, b}} p), (p \con
      q) \con \common{\set{a, b}} (p \con q)}$;

    \noindent $\D_3 = \set{\neg p, (p \con q) \con \common{\set{a, b}}
      (p \con q), p \con q, p, q, \distrib{a} [(p \con q) \con
      \common{\set{a, b}} (p \con q)],$

      $\distrib{b} [(p \con q) \con \common{\set{a, b}} (p \con q)]}$;

    \noindent $\D_4 = \set{\neg \common{\set{a, b}} p, \neg \distrib{a}
      (p \con \common{\set{a, b}} p), (p \con q) \con \common{\set{a,
          b}} (p \con q), p \con q, p, q,$

      $\distrib{a} [(p \con q) \con \common{\set{a, b}} (p \con q)],
      \distrib{b} [(p \con q) \con \common{\set{a, b}} (p \con q)]}$;

    \noindent $\D_5 = \set{\neg \common{\set{a, b}} p, \neg \distrib{b}
      (p \con \common{\set{a, b}} p), (p \con q) \con \common{\set{a,
          b}} (p \con q), p \con q, p, q,$

      $\distrib{a} [(p \con q) \con \common{\set{a, b}} (p \con q)],
      \distrib{b} [(p \con q) \con \common{\set{a, b}} (p \con q)]}$.

    }

    \medskip During the state-elimination phase, the state $\D_3$ is
    removed due to \Rule{E1}, as it contains a patent inconsistency
    ($p, \neg p$). Then, the states $\D_1, \D_2, \D_4$, and $\D_5$ are
    eliminated due to \Rule{E3}, as all of them contain the unrealized
    eventuality $\neg \common{\set{a,b}} p$.  Finally, $\D_0$ gets
    eliminated, as it has lost all its successors along the arrow
    marked with $\chi_0$.  Thus, the final tableau for $\Theta$ is an
    empty graph; therefore, $\Theta$ is \emph{unsatisfiable}.
\end{example}

\begin{example}
  Let $\theta = \common{\set{a, b}} p \con \common{\set{b, c}} p \con
  \neg \common{\set{a, c}} p$, where $\agents = \set{a, b, c}$. Once
  again, to save space, we replace $\theta$ with $\Theta =
  \set{\common{\set{a, b}} p, \common{\set{b, c}} p, \neg
    \common{\set{a, c}} p}$.  The picture below shows the final
  pretableau for $\Theta$; the initial tableau is easily extracted
  from it, as in the previous example. Formulae that occur in the
  labels of states and prestates are listed under the picture. 

\medskip
\begin{picture}(220,145)(-40,290)
    \footnotesize
    \thicklines
    \put(100,432){\makebox(0,0){
        {$\G_0$ }
      }}

    \put(87,425){\line(-1,-1){10}}
    \put(88.5,425){\line(-1,-1){10}}
    \put(78.75,416){\vector(-1,-1){5}}

    \put(107,425){\line(1,-1){10}}
    \put(108.5,425){\line(1,-1){10}}
    \put(116.75,416){\vector(1,-1){5}}

    \put(70,405){\makebox(0,0){
        {$\D_1$ }
      }}

    \put(66,400){\line(0,-1){15}}
    \put(66,391){\vector(0,-1){5}}


    \put(60,393){\makebox(0,0){
        {\tiny $\chi_1$ }
      }}

    \put(128,405){\makebox(0,0){
        {$\D_2$ }
      }}

    \put(127,400){\line(0,-1){15}}
    \put(127,391){\vector(0,-1){5}}

    \put(136,393){\makebox(0,0){
        {\tiny $\chi_2$ }
      }}

    \put(69,380){\makebox(0,0){
        {$\G_1$ }
      }}

    \put(130,380){\makebox(0,0){
        {$\G_2$ }
      }}

    \put(58,374){\line(-1,-1){10}}
    \put(59.5,374){\line(-1,-1){10}}
    \put(49.75,365){\vector(-1,-1){5}}

    \put(47,352){\makebox(0,0){
        {$\D_3$ }
      }}

    \qbezier(38, 352)(28, 365)(60, 380)
    \put(57,378){\vector(1,1){5}}
    \put(50,380){\makebox(0,0){
        {\tiny $\chi_1$ }
      }}

    \put(64,374){\line(0,-1){10}}
    \put(65.25,374){\line(0,-1){10}}
    \put(64.75,365){\vector(0,-1){5}}

    \put(67,352){\makebox(0,0){
        {$\D_4$ }
      }}

    \cut{\qbezier(60, 355)(54, 365)(63, 375)
    \put(61,373){\vector(1,1){5}}
    \put(55,358){\makebox(0,0){
        {\tiny $\chi_1$ }
      }}}

    \put(70, 345){\vector(1,-1){15}}
    \put(65,342){\makebox(0,0){
        {\tiny $\chi_2$ }
      }}

    \put(69,374){\line(1,-1){10}}
    \put(70.5,374){\line(1,-1){10}}
    \put(78.75,365){\vector(1,-1){5}}

    \put(87,352){\makebox(0,0){
        {$\D_5$ }
      }}

    \put(123,374){\line(-1,-1){10}}
    \put(124.5,374){\line(-1,-1){10}}
    \put(114.75,365){\vector(-1,-1){5}}

    \put(112,352){\makebox(0,0){
        {$\D_6$ }
      }}

    \put(129,374){\line(0,-1){10}}
    \put(130.25,374){\line(0,-1){10}}
    \put(129.75,365){\vector(0,-1){5}}

    \put(132,352){\makebox(0,0){
        {$\D_7$ }
      }}

    \cut{\qbezier(133, 355)(139, 365)(132, 375)
    \put(134,372){\vector(-1,1){5}}
    \put(141.5,358){\makebox(0,0){
        {\tiny $\chi_2$ }
      }}}

    \put(134,374){\line(1,-1){10}}
    \put(135.5,374){\line(1,-1){10}}
    \put(143.75,365){\vector(1,-1){5}}

    \put(152,352){\makebox(0,0){
        {$\D_8$ }
      }}

    \put(146.5,380){\makebox(0,0){
        {\tiny $\chi_2$ }
      }}
    \qbezier(159, 352)(166, 361)(138, 380)
    \put(140,378){\vector(-1,1){5}}

    \put(127,345){\vector(-1,-1){15}}
    \put(135,342){\makebox(0,0){
        {\tiny $\chi_1$ }
      }}

    \put(100,328){\makebox(0,0){
        {$\G_3$ }
      }}

    \put(89,323){\line(-1,-1){10}}
    \put(90.5,323){\line(-1,-1){10}}
    \put(80.75,314){\vector(-1,-1){5}}

    \put(76,300){\makebox(0,0){
        {$\D_9$ }
      }}
    \qbezier(68, 300)(58, 313)(90, 328)
    \put(87,326){\vector(1,1){5}}
    \put(80,328){\makebox(0,0){
        {\tiny $\chi_1$ }
      }}

    \put(97,323){\line(0,-1){10}}
    \put(98.5,323){\line(0,-1){10}}
    \put(97.75,314){\vector(0,-1){5}}

    \put(101,300){\makebox(0,0){
        {$\D_{10}$ }
      }}
    \qbezier(92, 303)(84, 313)(93, 323)
    \put(91,321){\vector(1,1){5}}
    \put(85,306){\makebox(0,0){
        {\tiny $\chi_2$ }
      }}

    \put(105,323){\line(1,-1){10}}
    \put(106.5,323){\line(1,-1){10}}
    \put(114.75,314){\vector(1,-1){5}}

    \put(126,300){\makebox(0,0){
        {$\D_{11}$ }
      }}

    
    \cut{\qbezier(97, 305)(95, 313)(80, 325)
    \put(80,324){\vector(-1,1){5}}
    \put(88,325){\makebox(0,0){
        {\tiny $\chi_2$ }
      }}}

    \cut{ \qbezier(97, 305)(100, 313)(115, 325)
    \put(113,322){\vector(1,1){5}}
    \put(108 ,324){\makebox(0,0){
        {\tiny $\chi_1$ }
      }}

    \put(127,328){\makebox(0,0){
        {$\G_4$ }
      }}

    \put(119,323){\line(-1,-1){10}}
    \put(120.5,323){\line(-1,-1){10}}
    \put(110.75,314){\vector(-1,-1){5}}

    \put(124,323){\line(0,-1){10}}
    \put(125.5,323){\line(0,-1){10}}
    \put(124.75,314){\vector(0,-1){5}}

    \put(125,300){\makebox(0,0){
        {$\D_{12}$ }
      }}
    \qbezier(130, 303)(136, 313)(129, 323)
    \put(131,320){\vector(-1,1){5}}
    \put(138.5,306){\makebox(0,0){
        {\tiny $\chi_1$ }
      }}

    \put(130,323){\line(1,-1){10}}
    \put(131.5,323){\line(1,-1){10}}
    \put(139.75,314){\vector(1,-1){5}}

    \put(150,300){\makebox(0,0){
        {$\D_{13}$ }
      }}

    \put(146,328){\makebox(0,0){
        {\tiny $\chi_2$ }
      }}
    \qbezier(159, 300)(166, 309)(135, 327)
    \put(137.5,325){\vector(-1,1){5}}}
  \end{picture}

  {\small

    \noindent $\chi_1 = \neg \distrib{a} (p \con \common{\set{a, c}}
    p)$; $\chi_2 = \neg \distrib{c} (p \con \common{\set{a, c}} p)$;

  \noindent $\G_0 = \set{\common{\set{a, b}} p, \common{\set{b, c}} p,
    \neg \common{\set{a, c}} p}$;

  \noindent $\D_1 = \set{\common{\set{a, b}} p, \common{\set{b, c}} p,
    \neg \common{\set{a, c}} p, \distrib{a} (p \con \common{\set{a,
        b}}p), \distrib{b} (p \con \common{\set{a, b}}p), \distrib{b}
    (p \con \common{\set{b, c}}p),$

    $\distrib{c} (p \con \common{\set{b, c}}p), \neg \distrib{a} (p
    \con \common{\set{a, c}}p)}$;

  \noindent $\D_2 = \set{\common{\set{a, b}} p, \common{\set{b, c}} p,
    \neg \common{\set{a, c}} p, \distrib{a} (p \con \common{\set{a,
        b}}p), \distrib{b} (p \con \common{\set{a, b}}p), \distrib{b}
    (p \con \common{\set{b, c}}p),$

    $\distrib{c} (p \con \common{\set{b, c}}p), \neg \distrib{c} (p
    \con \common{\set{a, c}}p)}$;

  \noindent $\G_1 = \set{\neg (p \con \common{\set{a, c}} p),
    \distrib{a} (p \con \common{\set{a, b}}p)}$;

  \noindent $\G_2 = \set{\neg (p \con \common{\set{a, c}} p),
    \distrib{c} (p \con \common{\set{b, c}}p)}$;

  \noindent $\D_3 = \set{ \neg \common{\set{a, c}} p, \distrib{a} (p
    \con \common{\set{a, b}}p), \neg \distrib{a} (p \con
    \common{\set{a, c}}p)}$;

  \noindent $\D_4 = \set{\neg \common{\set{a, c}} p, \distrib{a} (p
    \con \common{\set{a, b}} p), \neg \distrib{c} (p \con
    \common{\set{a, c}} p}$;

  \noindent $\D_5= \set{\neg p, \distrib{a} (p \con \common{\set{a,
        b}}p), p \con \common{\set{a, b}}p, p, \common{\set{a, b}}p}$;

  \noindent $\D_6= \set{\neg p, \distrib{c} (p \con \common{\set{b,
        c}}p), p \con \common{\set{b, c}}p, p, \common{\set{b, c}}p}$;

  \noindent $\D_7 = \set{\neg \common{\set{a, c}} p, \distrib{c} (p
    \con \common{\set{b, c}} p), \neg \distrib{a} (p \con
    \common{\set{a, c}} p}$;

  \noindent $\D_8 = \set{\neg \common{\set{a, c}} p, \distrib{c} (p
    \con \common{\set{b, c}} p), \neg \distrib{c} (p \con
    \common{\set{a, c}} p}$;

  \noindent $\G_3 = \set{\neg(p \con \common{\set{a, c}} p)}$

  \noindent $\D_{9} = \set{\neg \common{\set{a, c}} p, \neg
    \distrib{a} (p \con \common{\set{a, c}} p)}$;

  \noindent $\D_{10} = \set{\neg \common{\set{a, c}} p, \neg
    \distrib{c} (p \con \common{\set{a, c}})}$;

  \noindent $\D_{11} = \set{\neg p}$.}

\medskip

At the state elimination phase, states $\D_5$ and $\D_6$ get removed
due to \Rule{E1}.  All other states remain in place; in particular, no
states gets eliminated due to \Rule{E3}, because, from any state one
can reach either $\D_9$ or $\D_{13}$, both of which contain $\neg p$,
and our only eventuality is $\neg \common{\set{a, c}} p$. Thus,
$\theta$ is \emph{satisfiable}, and a Hintikka structure for it is
readily extracted from the final tableau.
\end{example}

\end{document}